\newtheorem{theorem}{Theorem}
\newtheorem{remark}{Remark}
\begin{document}

\preprint{\textit{Chaos}}

\title[Pattern formation of parasite-host model induced by fear effect]{Pattern formation of parasite-host model induced by fear effect}
\author{Yong Ye}
\author{Yi Zhao}
 \email{zhao.yi@hit.edu.cn}
\author{Jiaying Zhou}
\affiliation{School of Science, Harbin Institute of Technology (Shenzhen), Shenzhen 518055, China}
\date{\today}

\begin{abstract}
In this paper, based on the epidemiological microparasite model, a parasite-host model is established by considering the fear effect of susceptible individuals on infectors. We explored the pattern formation with the help of numerical simulation, and analyzed the effects of fear effect, infected host mortality, population diffusion rate and  reducing reproduction ability of infected hosts on population activities in different degrees. Theoretically, we give the general conditions for the stability of the model under non-diffusion and considering the Turing instability caused by diffusion. Our results indicate how fear affects the distribution of the uninfected and infected hosts in the habitat and quantify the influence of the fear factor on the spatiotemporal pattern of the population. In addition, we analyze the influence of natural death rate, reproduction ability of infected hosts, and diffusion level of uninfected (infected) hosts on the spatiotemporal pattern, respectively. The results present that the growth of pattern induced by intensified fear effect follows the certain rule: cold spots $\rightarrow$ cold spots-stripes $\rightarrow$ cold stripes $\rightarrow$ hot stripes $\rightarrow$ hot spots-stripes $\rightarrow$ hot spots. Interestingly, the natural mortality and fear effect take the opposite effect on the growth order of the pattern. From the perspective of biological significance, we find that the degree of fear effect can reshape the distribution of population to meet the previous rule.
\end{abstract}

\maketitle

\begin{quotation}
With the development of reaction-diffusion equation, the research on the pattern formation of population model has been widely concerned. Among them, the research on spatiotemporal dynamics of predator-prey model is particularly rich. Recently, Wang et al.~\cite{ref2} first proposed the mathematical expression of fear effect and considered the fear effect in the traditional predator-prey model. It was found that the addition of fear effect brought complex dynamic phenomena. Since then, many predator-prey models with fear effect have been studied. Considering that the fear effect also exists between uninfected host and infected host. Therefore, this paper attempts to introduce the fear effect into the  epidemiological microparasite model and construct a parasite-host model with fear effect. With the help of computer, we simulate the distribution of population under different degrees of fear, and combined with other ecological factors to explore how the distribution of population will change under the interference of various factors. Our results show that the growth of pattern satisfies some laws under different ecological factors. Hopefully, this work will provide us further understanding of population dynamics in a real environment stimulated by the fear effect.
\end{quotation}

\section{Introduction}\label{section1}
Ecologists recognize that diseases and parasites play an important role in population dynamics~\cite{ref3,ref9,ref13,ref14}. The spatial components of ecological interactions have been identified as an important factor in how populations operate and form. However, understanding the role of space is challenging both theoretically and empirically~\cite{ref15,ref16,ref17,ref18}. In recent years, many studies have shown that pattern formation in parasite-host model is an appropriate tool to understand the basic mechanism of parasite spatiotemporal dynamics. In 2003, Hwang and Kuang established a parasite-host ordinary differential equation (ODE) model \cite{ref3} based on the work of Ebert et al.~\cite{ref9}. On this basis, the dynamics and pattern formation of the reaction-diffusion parasite-host model were studied in~\cite{ref1,ref8}, the model is as follows:

\begin{equation}\label{1.1}
\left\{\begin{array}{lll}
\frac{\partial S}{\partial t}-d_{1} \Delta S=r(S+\rho I)(1-a(S+I))-\frac{\beta S I}{S+I}, & x \in \Omega, & t>0, \\
\frac{\partial I}{\partial t}-d_{2} \Delta I=\frac{\beta S I}{S+I}-\mu I, & x \in \Omega, & t>0, \\
\frac{\partial S}{\partial \mathbf{n}}=\frac{\partial I}{\partial \mathbf{n}}=0, & x \in \partial \Omega, & t>0, \\
S(x, 0)=S_{0}(x), I(x, 0)=I_{0}(x), & x \in \Omega, &
\end{array}\right.
\end{equation}
where $\frac{\beta S I}{S+I}$ denotes the frequency-dependent transmission~\cite{ref1,ref3}, the variable $S$ represents density of uninfected (susceptible) hosts, and $I$ represents density of infected (infective) hosts. The habitat $\Omega \subset \mathbb{R}^{n}$ is a positive bounded region with smooth boundary $\partial \Omega$, $x$ represents location in the habitat, $\mathbf{n}$ stands for the outward unit normal vector on $\partial \Omega$, $d_{1}$ and $d_{2}$ respectively represent self diffusion coefficients of uninfected host and infected host, and $\Delta $ denotes Laplacian operator. In this paper we assume that the habitat is closed, which means that the infected (uninfected) host population inside the habitat cannot go out, while the infected (uninfected) host population outside the habitat cannot enter, and there is no host population on the boundary. That is to say, the boundary condition we consider is zero-flux (i.e., Neumann boundary). The biological significance of other parameters are described in Table~\ref{tab2}. It is worth noting that all parameters are positive and $0 \leq \rho \leq 1$.

In recent years, the relevant experiments on risk perception (i.e., indirect effect) of biological population have been proposed. Zanette et al.'s experiment on the perception of predation risk by songbirds shows that the number of offspring produced each year is reduced by $40 \%$ just by the perception of predation risk. So the perception of predation risk takes obviously significant effect on the dynamics of biological population~\cite{ref11}. Abbey-Lee et al. adopts playback technology to conduct experiments on the perceived predation risk in nest-box populations of wild great tits (Parus major), to investigate the effects of nonconsumptive on the predation behavior, the morphology of bird predators, and the individual responses to predation~\cite{ref12}. Their results show that the sensitivity of individuals to predation risk leads to the different adaptability. As a typical and representative indirect effect, the fear factor is used to describe the physiological changes caused by the stress behavior of the prey population since the prey needs to be alert to predators coming at any time~\cite{ref10}. In order to characterize the influence of anti-predator behavior on predator-prey system, Wang et al. firstly propose a predator-prey model with consideration of the fear factor in the growth of the prey~\cite{ref2}. After that, the researchers consider that the outbreak and spread of infectious diseases would bring people fear. In literature~\cite{ref36}, they think that the susceptible population has a fear effect, which shows that the fear effect can reduce the growth rate of the susceptible population. Following their ideas, we assume that the uninfected population has anti-infected behavior, that is, the fear effect on infected population. We then consider the growth function of the uninfected population with fear effect as follows: $ \frac{\mathrm{d} S}{\mathrm{~d} t}=[F(k, I) r] S $. $F(k, I)$ accounts for the cost of anti-infected due to uninfected population, and the parameter $k$ reflects the level of fear which drives anti-infected behavior of the uninfected host. Which is similar to \cite{ref2,ref4}, in Table~\ref{tab1}, we show the conditions that the fear factor $F(k, I)$ satisfies.

\begin{table*}[h]
    \centering
        \caption{Conditions that fear factor $F(k, I)$ satisfies.}
    \begin{tabular}{cclll}
    \hline\hline
  Conditions & Statements \\
  \hline
 $F(0, I)=1$ & When there is no fear, the maximum birth rate of uninfected population did not decrease \\
 $F(k, 0)=1$ & When there is no infected host, the maximum birth rate of uninfected population did not decrease \\
 $\lim _{k \rightarrow \infty} F(k, I)=0 $ & When anti-infected behavior is large enough, the uninfected production reduces to $0$ \\
 $\lim _{I \rightarrow \infty} F(k, I)=0$ & When infected population is large enough, the uninfected production reduces to $0$ \\
 $\frac{\partial F(k, I)}{\partial k}<0$ & When anti-infected behavior increases, the uninfected production decreases \\
 $\frac{\partial F(k, I)}{\partial I}<0$ & When infected population increases, the uninfected production decreases \\
 \hline\hline
    \end{tabular}
    \label{tab1}
\end{table*}

Population dynamics and epidemic dynamics models considering fear effect have been widely studied \cite{ref4,ref5,ref6,ref7,ref10,ref26,ref28,ref29,ref30,ref31,ref32,ref33,ref35,ref36,ref37}. This paper focus on the influence of fear factor on the pattern formation of host parasite model and the complex dynamic changes. Following \cite{ref2}, we introduce the fear effect $F(k, I):=\frac{1}{1+k I} $ as the fear effect in model~(\ref{1.1}). which satisfies the conditions in Table~\ref{tab1}. Then we obtained the following model:
\begin{equation}\label{1.2}
\left\{\begin{array}{lll}
\frac{\partial S}{\partial t}-d_{1} \Delta S=\frac{rS}{1+kI}+r\rho I-ra(S+\rho I)(S+I)-\frac{\beta S I}{S+I}, & x \in \Omega, & t>0, \\
\frac{\partial I}{\partial t}-d_{2} \Delta I=\frac{\beta S I}{S+I}-\mu I, & x \in \Omega, & t>0, \\
\frac{\partial S}{\partial \mathbf{n}}=\frac{\partial I}{\partial \mathbf{n}}=0, & x \in \partial \Omega, & t>0, \\
S(x, 0)=S_{0}(x), I(x, 0)=I_{0}(x), & x \in \Omega. &
\end{array}\right.
\end{equation}

The main structure of this paper is as follows. In Section~\ref{section2}, we discuss the existence condition of equilibrium in the non-diffusion model~(\ref{1.2}), and obtain the stability condition through the general linear stability analysis. In Section~\ref{section3}, we analyze the diffusion model~(\ref{1.2}) and find out the Turing space where Turing instability occurs. Then, the hexagonal and stripe pattern of model~(\ref{1.2}) are studied by using amplitude equation near the critical value of control parameters. In Section~\ref{section4}, we use numerical simulation to illustrate the different patterns we found. Finally, the results and future work are discussed.

\section{Model without Diffusion}\label{section2}
Since this paper mainly discusses Turing instability caused by diffusion, and the premise of Turing instability is to ensure that the positive equilibrium is stable without diffusion. We first consider the case of ODE model without diffusion, i.e., $d_{1}=d_{2}=0$. Then the existence and stability conditions of nontrivial (positive) equilibria will be given.
\subsection{Existence of Equilibria}
If
\begin{equation}\label{1.3}
\left\{\begin{array}{lll}
\frac{rS_{n*}}{1+kI_{n*}}+r\rho I_{n*}-ra(S_{n*}+\rho I_{n*})(S_{n*}+I_{n*})-\frac{\beta S_{n*} I_{n*}}{S_{n*}+I_{n*}}=0, \\
\frac{\beta S_{n*} I_{n*}}{S_{n*}+I_{n*}}-\mu I_{n*}=0,n=(1,2,3).
\end{array}\right.
\end{equation}
\begin{theorem}\label{theorem1}
 Model~(\ref{1.2}) has a trivial equilibrium $E_{0}=(0,0)$ and a semi-trivial equilibrium $E_{1}=(\frac{1}{a},0)$. Furthermore,when $B(k)^2-4A(k)C(k)> 0$ and $C(k)<0$, model~(\ref{1.2}) has a positive equilibrium $E_{2*}=(S_{2*},I_{2*})$, where $S_{2*}=\frac{-B(k) +\sqrt{B(k)^{2}-4 A(k) C(k)}}{2 A(k)}$, $I_{2*}=\frac{(\beta-\mu)S_{2*}}{\mu}$, ($S_{2*}>S_{3*}$, $I_{2*}>I_{3*}$); If $B(k)^2-4A(k)C(k)= 0$, the positive equilibrium $E_{1*}=(S_{1*},I_{1*})$, where $S_{1*}=\frac{-B(k)}{2 A(k)}$, $I_{1*}=\frac{(\beta-\mu)S_{1*}}{\mu}$ and $\beta > \mu$.
\end{theorem}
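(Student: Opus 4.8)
\emph{Plan of proof.} I would obtain all equilibria by solving the algebraic system~(\ref{1.3}), treating the three families (boundary, interior, degenerate interior) in turn.

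\emph{Boundary equilibria.} First I would look for solutions with $I=0$. Then the second equation of~(\ref{1.3}) holds identically, and — after interpreting the frequency-dependent term $\beta SI/(S+I)$ by its continuous extension (it equals $0$ whenever $I=0$ and tends to $0$ as $(S,I)\to(0,0)$ along positive directions, so that $E_0$ is an equilibrium in the extended sense) — the first equation collapses to $rS(1-aS)=0$. This yields exactly $S=0$ and $S=1/a$, i.e.\ the trivial equilibrium $E_0=(0,0)$ and the semi-trivial equilibrium $E_1=(\tfrac1a,0)$.

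\emph{Reduction for interior equilibria.} For a positive equilibrium with $S,I>0$ I would divide the second equation of~(\ref{1.3}) by $I$ to get $\beta S/(S+I)=\mu$, i.e.\ $\mu I=(\beta-\mu)S$; positivity of $I$ then forces $\beta>\mu$ and gives $I=\frac{(\beta-\mu)S}{\mu}$. Substituting this into the first equation of~(\ref{1.3}), the transmission term becomes $(\beta-\mu)S$ and $S+I=\frac{\beta}{\mu}S$; dividing through by $S>0$ and clearing the denominator $1+kI=1+\frac{k(\beta-\mu)}{\mu}S$ produces a quadratic $A(k)S^2+B(k)S+C(k)=0$. A short computation identifies the coefficients; the structural facts I would emphasize are that $A(k)>0$ for all admissible parameters (it is a positive multiple of $k$), while the constant term $C(k)$, obtained after clearing denominators, carries the sign that governs how many positive roots exist.

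\emph{Counting positive roots.} With $A(k)>0$, I would read off the roots via Vieta: their product is $C(k)/A(k)$. If $C(k)<0$ the roots have opposite signs, so there is a unique positive one, and the quadratic formula gives precisely $S_{2*}=\frac{-B(k)+\sqrt{B(k)^2-4A(k)C(k)}}{2A(k)}$ (here $B(k)^2-4A(k)C(k)>0$ is automatic since $-4A(k)C(k)>0$, but it is recorded in the hypothesis); then $I_{2*}=\frac{(\beta-\mu)S_{2*}}{\mu}>0$ because $\beta>\mu$, and the other branch $S_{3*}=\frac{-B(k)-\sqrt{B(k)^2-4A(k)C(k)}}{2A(k)}<0$ is inadmissible, which explains the ordering $S_{2*}>S_{3*}$, $I_{2*}>I_{3*}$. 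In the degenerate case $B(k)^2-4A(k)C(k)=0$ the quadratic has the double root $S_{1*}=-B(k)/(2A(k))$, positive precisely when $B(k)<0$, and again $\beta>\mu$ gives $I_{1*}=\frac{(\beta-\mu)S_{1*}}{\mu}>0$. The only step needing genuine care is the algebraic reduction to the quadratic — tracking the denominator $1+kI$, the cancellations in the transmission term, and the signs — so that the explicit $A(k),B(k),C(k)$ reproduce the stated formulas for $S_{1*}$ and $S_{2*}$; the rest is a routine sign discussion of a quadratic with positive leading coefficient.
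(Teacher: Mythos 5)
Your proposal is correct and follows essentially the same route as the paper: eliminate $I$ via $\beta S/(S+I)=\mu$ (forcing $\beta>\mu$ and $I=\frac{(\beta-\mu)S}{\mu}$), clear the denominators to obtain the quadratic $A(k)S^2+B(k)S+C(k)=0$ with $A(k)>0$, and then read off positive roots from the signs of $B(k)$, $C(k)$ and the discriminant. The only cosmetic differences are that you justify the boundary equilibria $E_0$, $E_1$ by continuously extending the transmission term (the paper simply asserts them) and you count positive roots via Vieta's product rather than the paper's explicit case-by-case enumeration from the quadratic formula.
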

\begin{proof}
We can calculate that model~(\ref{1.2}) has a trivial equilibrium $E_{0}=(0,0)$ and a semi-trivial equilibrium $E_{1}=(\frac{1}{a},0)$. Furthermore, it can be obtained from Eq.~(\ref{1.3}) that
\begin{equation}\label{1.4}
I_{n*}=\frac{(\beta-\mu)S_{n*}}{\mu},
\end{equation}
and
\begin{align}\label{1.5}
rS_{n*}(S_{n*}+I_{n*})&+r\rho I_{n*}(S_{n*}+I_{n*})(1+kI_{n*})\nonumber\\
&-ra(S_{n*}+\rho I_{n*})(S_{n*}+I_{n*})^2(1+kI_{n*})-\beta S_{n*} I_{n*}(1+kI_{n*})=0,
\end{align}
By taking Eq.~(\ref{1.4}) into Eq.~(\ref{1.5}), we can get
\begin{align}\label{1.6}
r(1+\frac{(\beta-\mu)}{\mu})&+r\rho \frac{(\beta-\mu)}{\mu}(1+\frac{(\beta-\mu)}{\mu})(1+k\frac{(\beta-\mu)S_{n*}}{\mu})\nonumber\\
&-raS_{n*}(1+\rho\frac{(\beta-\mu)}{\mu})(1+\frac{(\beta-\mu)}{\mu})^2(1+k\frac{(\beta-\mu)S_{n*}}{\mu})\nonumber\\
&-\beta \frac{(\beta-\mu)}{\mu}(1+k\frac{(\beta-\mu)S_{n*}}{\mu})=0,
\end{align}
that is
\begin{align}\label{1.7}
r(1+m)&+r\rho m(1+m)(1+kmS_{n*})\nonumber\\
&-raS_{n*}(1+\rho m)(1+m)^2(1+kmS_{n*})\nonumber\\
&-\beta m(1+kmS_{n*})=0,
\end{align}
where
\begin{equation}\label{1.8}
m=\frac{\beta-\mu}{\mu}.
\end{equation}
Then, we obtain
\begin{align}\label{1.9}
kram(1+\rho m)(1+m)^2S_{n*}^2&+(k\beta m^2+ra(1+\rho m)(1+m)^2-kr\rho m^2(1+m))S_{n*}\nonumber\\
&+\beta m-r(1+m)(1+\rho m)=0.
\end{align}
Let
\begin{align}\label{1.10}
A(k)&=kram(1+\rho m)(1+m)^2>0,\\
B(k)&=k\beta m^2+ra(1+\rho m)(1+m)^2-kr\rho m^2(1+m),\\
C(k)&=\beta m-r(1+m)(1+\rho m),
\end{align}
that is
\begin{equation}\label{1.11}
A(k)S_{n*}^2+B(k)S_{n*}+C(k)=0.
\end{equation}
Eq.~(\ref{1.11}) has the following positive solutions:
\begin{equation}\label{1.12}
S_{n*}=\frac{-B(k) \pm \sqrt{B(k)^{2}-4 A(k) C(k)}}{2 A(k)}.
\end{equation}
We then obtain the following results:
\begin{itemize}
  \item Let $B(k)^2-4A(k)C(k)<0$, then model~(\ref{1.2}) has no positive equilibrium.
  \item Let $B(k)^2-4A(k)C(k)=0$, when $B(k)<0$, model~(\ref{1.2}) has a positive equilibrium $E_{1*}=(S_{1*},I_{1*})$, where $S_{1*}=\frac{-B(k)}{2 A(k)}$, $I_{1*}=\frac{(\beta-\mu)S_{1*}}{\mu}$.
  \item Let $B(k)^2-4A(k)C(k)>0$,\begin{enumerate}
                                   \item when $B(k)<0$ and $C(k)>0$, model~(\ref{1.2}) has two positive equilibrium $E_{(2,3)*}=(S_{(2,3)*},I_{(2,3)*})$, where $S_{(2,3)*}=\frac{-B(k) \pm \sqrt{B(k)^{2}-4 A(k) C(k)}}{2 A(k)}$, $I_{(2,3)*}=\frac{(\beta-\mu)S_{(2,3)*}}{\mu}$,
                                   \item when $C(k)<0$, model~(\ref{1.2}) has a positive equilibrium $E_{2*}=(S_{2*},I_{2*})$ where $S_{2*}=\frac{-B(k) +\sqrt{B(k)^{2}-4 A(k) C(k)}}{2 A(k)}$, $I_{2*}=\frac{(\beta-\mu)S_{2*}}{\mu}$, ($S_{2*}>S_{3*}$, $I_{2*}>I_{3*}$),
                                   \item when $B(k)>0$ and $C(k)>0$, model~(\ref{1.2}) has no positive equilibrium.
                                 \end{enumerate}
\end{itemize}
\begin{remark}\label{remark1}
\begin{itemize}
  \item When $B(k_1)=k_1\beta m^2+ra(1+\rho m)(1+m)^2-k_1r\rho m^2(1+m)=0$, we obtain $k_1=\frac{ra(1+\rho m)(1+m)^2}{r\rho m^2(1+m)-\beta m^2}$.
  \item When $C(k)=\beta m-r(1+m)(1+\rho m)=0$, we obtain $\beta=\frac{r(1+\rho m)(1+m)}{m}$.
\end{itemize}
\end{remark}
In summary, according to Remark~\ref{remark1}, we can find that when $C(k)=\beta m-r(1+m)(1+\rho m)>0$ and $\beta >\frac{r(1+\rho m)(1+m)}{m}$ that is, $B(k)=k\beta m^2+ra(1+\rho m)(1+m)^2-kr\rho m^2(1+m)>0$. There are no two positive equilibria in the model~(\ref{1.2}). The proof of the Theorem~\ref{theorem1} is completed.
\end{proof}
\subsection{Stability Analysis}
In this subsection, we will analyze the stability of trivial equilibrium $E_{0}=(0,0)$, semi-trivial equilibrium $E_{1}=(\frac{1}{a},0)$ and nontrivial equilibrium (positive equilibrium) $E_{n*}=(S_{n*},I_{n*})$, $(n=1,2)$.\\
\begin{theorem}\label{theorem2}
\begin{description}
  \item[(1)] $E_{0}=(0,0)$ is a saddle point;
  \item[(2)] If $\beta < \mu$, then $E_{1}=(\frac{1}{a},0)$ is a stable node; if $\beta > \mu$, $E_{1}=(\frac{1}{a},0)$ is a saddle point;
  \item[(3)] When Theorem~\ref{theorem1} holds, if $\ Det({J_{n*}}) =a_{10}b_{01}-a_{01}b_{10} > 0$, there are two conditions:
  \begin{itemize}
    \item $r(m+1)-\beta m> 0$ and $S_{n*}>S^{+}_{n*}$, we can calculate $\ Tr({J_{n*}}) = a_{10}+b_{01}<0$, then $E_{(1,2)*}=(S_{(1,2)*},I_{(1,2)*})$ is stable;
    \item $r(m+1)-\beta m\leq 0$, we can find $\ Tr({J_{n*}}) = a_{10}+b_{01}<0$, then $E_{(1,2)*}=(S_{(1,2)*},I_{(1,2)*})$ is stable.
  \end{itemize}

\end{description}
\end{theorem}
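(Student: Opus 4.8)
The plan is to treat the three equilibria one at a time, in each case evaluating the Jacobian
$J=\begin{pmatrix} f_S & f_I\\ g_S & g_I\end{pmatrix}$ of the reaction part of (\ref{1.2}) — with $f(S,I)=\frac{rS}{1+kI}+r\rho I-ra(S+\rho I)(S+I)-\frac{\beta SI}{S+I}$, $g(S,I)=\frac{\beta SI}{S+I}-\mu I$, $g_S=\beta I^2/(S+I)^2$ and $g_I=\beta S^2/(S+I)^2-\mu$ — and then reading off the eigenvalues (for the boundary equilibria) or the sign of the trace and determinant (for the positive ones). For part (1), on the invariant axis $\{I=0\}$ the $S$-equation reduces to $\dot S=rS(1-aS)$, which contributes the eigenvalue $r>0$ at the origin, while the transverse equation contributes $-\mu<0$; the frequency-dependent incidence vanishes identically on this axis and tends to $0$ at the origin, so it plays no role in this computation. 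Two eigenvalues of opposite sign give that $E_0$ is a saddle. For part (2), at $E_1=(\tfrac1a,0)$ the incidence term is smooth and $g_S=\beta I^2/(S+I)^2=0$ when $I=0$, so $J(E_1)$ is upper triangular with diagonal entries $f_S(E_1)=r-2ra\cdot\tfrac1a=-r<0$ and $g_I(E_1)=\beta-\mu$; hence the eigenvalues are $-r$ and $\beta-\mu$, yielding a stable node when $\beta<\mu$ and a saddle when $\beta>\mu$.

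The substance is part (3). At a positive equilibrium, $g=0$ with $I_{n*}>0$ forces $\frac{\beta S_{n*}}{S_{n*}+I_{n*}}=\mu$, hence $\frac{S_{n*}}{S_{n*}+I_{n*}}=\frac\mu\beta$, $\frac{I_{n*}}{S_{n*}+I_{n*}}=\frac{\beta-\mu}\beta$ and $I_{n*}=mS_{n*}$ with $m$ as in (\ref{1.8}); note also that a positive equilibrium requires $\beta>\mu$. Substituting into the Jacobian entries gives $b_{01}=g_I=\frac{\mu^2}\beta-\mu=-\frac{\mu(\beta-\mu)}\beta<0$, $b_{10}=g_S=\frac{(\beta-\mu)^2}\beta>0$, and
\[
a_{10}=f_S=\frac{r}{1+kmS_{n*}}-raS_{n*}\bigl(2+(1+\rho)m\bigr)-\frac{(\beta-\mu)^2}\beta .
\]
Since $\mathrm{Det}(J_{n*})>0$ is assumed, by the trace–determinant (Routh–Hurwitz) criterion for a $2\times2$ matrix it suffices to establish $\mathrm{Tr}(J_{n*})=a_{10}+b_{01}<0$. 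The key point is that $a_{10}$, and therefore $\mathrm{Tr}(J_{n*})$ (because $b_{01}$ depends only on the parameters), is a \emph{strictly decreasing} function of $S_{n*}$ on $(0,\infty)$.

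To pin down the threshold, multiply $\mathrm{Tr}(J_{n*})$ by $1+kmS_{n*}>0$ and use the cancellation $\frac{(\beta-\mu)^2}\beta+\frac{\mu(\beta-\mu)}\beta=\beta-\mu$; then $\mathrm{Tr}(J_{n*})(1+kmS_{n*})$ equals the downward parabola
\[
(r+\mu-\beta)-\bigl[ra(2+(1+\rho)m)+km(\beta-\mu)\bigr]S_{n*}-kmra\bigl(2+(1+\rho)m\bigr)S_{n*}^{2},
\]
whose value at $S_{n*}=0$ is $r+\mu-\beta$, which has the same sign as $r(m+1)-\beta m$ since $r(m+1)-\beta m=\tfrac\beta\mu(r+\mu-\beta)$. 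Hence: if $r(m+1)-\beta m\le0$, the parabola is $\le0$ at $0$ and, being strictly decreasing there and opening downward, stays negative for all $S_{n*}>0$, so $\mathrm{Tr}(J_{n*})<0$; if $r(m+1)-\beta m>0$, the parabola is positive at $0$ and opens downward, so it has a unique positive root $S^{+}_{n*}$ beyond which it — hence $\mathrm{Tr}(J_{n*})$ — is negative, and the hypothesis $S_{n*}>S^{+}_{n*}$ yields $\mathrm{Tr}(J_{n*})<0$. In either case $\mathrm{Tr}(J_{n*})<0$ and $\mathrm{Det}(J_{n*})>0$, so $E_{n*}$ is locally asymptotically stable. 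The main obstacle is the bookkeeping in part (3): pushing the equilibrium substitutions correctly through $f_S$ and spotting the cancellation that turns $\lim_{S_{n*}\to0^+}\mathrm{Tr}(J_{n*})$ into the clean quantity $r+\mu-\beta$, which is what identifies $S^{+}_{n*}$ with the positive root of an explicit quadratic; a secondary, more delicate point is that $\beta SI/(S+I)$ is not differentiable at $E_0$, which is why part (1) is argued through the invariant $S$-axis rather than through a naive Jacobian at the origin.
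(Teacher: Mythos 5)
Your proof is correct and takes essentially the same route as the paper: eigenvalues of the Jacobian at $E_0$ and $E_1$, and for the positive equilibria the Routh--Hurwitz reduction of $\mathrm{Tr}(J_{n*})<0$ to the sign of a downward-opening quadratic in $S_{n*}$ (your quadratic is the one in Eq.~(\ref{1.22}) divided by the positive factor $1+m$, so your threshold coincides with the paper's $S^{+}_{n*}$), combined with the assumed $\mathrm{Det}(J_{n*})>0$. The only cosmetic difference is that you treat $E_0$ via the invariant coordinate axes because the incidence term is not differentiable there, whereas the paper simply uses the formal Jacobian $J_0$ with $\mathrm{Det}(J_0)<0$.
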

\begin{proof}
We provide stability analysis by calculating the eigenvalues of Jacobian matrix of the model~(\ref{1.2}). Let
\begin{align}\label{1.13}
f(S,I) &= \frac{rS}{1+kI}+r\rho I-ra(S+\rho I)(S+I)-\frac{\beta S I}{S+I},\\
g(S,I) &= \frac{\beta S I}{S+I}-\mu I,
\end{align}
the Jacobian matrix~$J$~for model~(\ref{1.2}) is
\begin{equation}\label{1.14}
J = \left( {\begin{array}{*{20}{c}}
{\frac{{\partial f}}{{\partial S}}}&{\frac{{\partial f}}{{\partial I}}}\\
{\frac{{\partial g}}{{\partial S}}}&{\frac{{\partial g}}{{\partial I}}}
\end{array}} \right),
\end{equation}
where
\[\begin{array}{l}
\frac{{\partial f}}{{\partial S}} = \frac{r}{1+kI}-ra(2S+(\rho+1)I)-\frac{\beta I^2}{(S+I)^2},~\frac{{\partial g}}{{\partial S}} = \frac{\beta I^2}{(S+I)^2},\\
~\frac{{\partial f}}{{\partial I}} = -\frac{rkS}{(1+kI)^2}+r\rho-ra(2\rho I+(\rho+1)S)-\frac{\beta S^2}{(S+I)^2},
~\frac{{\partial g}}{{\partial I}} = \frac{\beta S^2}{(S+I)^2} - \mu.
\end{array}\]

Evaluating the Jacobian matrix for model~(\ref{1.2}) at $E_{0}=(0,0)$, we find
\begin{equation}\label{1.15}
{J_{\rm{0}}} = \left( {\begin{array}{*{20}{c}}
{ r }&{ r\rho}\\
0&{-\mu}
\end{array}} \right),
\end{equation}
the characteristic polynomial is
\begin{equation}\label{1.16}
H_0(\lambda ) = {\lambda ^2} - Tr({J_0})\lambda  + Det({J_0}),
\end{equation}
where $Det({J_0})<0$, so we can see the trivial equilibrium $E_{0}=(0,0)$ is a saddle point.

Given the Jacobian matrix for the model~(\ref{1.2}) evaluated at $E_{1}=(\frac{1}{a},0)$, we find
\begin{equation}\label{1.17}
{J_{\rm{1}}} = \left( {\begin{array}{*{20}{c}}
{ -r }&{-\frac{rk}{a}+r\rho-r(\rho+1)-\beta}\\
0&{\beta -\mu}
\end{array}} \right),
\end{equation}
and the characteristic polynomial is
\begin{equation}\label{1.18}
H_1(\lambda ) = {\lambda ^2} - Tr({J_1})\lambda  + Det({J_1}),
\end{equation}
so we can calculate if $\beta < \mu$, then $E_{1}=(\frac{1}{a},0)$ is a stable node; if $\beta > \mu$, $E_{1}=(\frac{1}{a},0)$ is a saddle point.

The Jacobian matrix for the model~(\ref{1.2}) evaluated at $E_{n*}=(S_{n*},I_{n*})$, $(n=1,2)$ is given by
\begin{equation}\label{1.19}
{J_{n*}} = \left( {\begin{array}{*{20}{c}}
{{a_{10}}}&{{a_{01}}}\\
{{b_{10}}}&{{b_{01}}}
\end{array}} \right),
\end{equation}
where
\[\begin{array}{l}
a_{10} = \frac{r}{1+kI_{n*}}-ra(2S_{n*}+(\rho+1)I_{n*})-\frac{\beta I_{n*}^2}{(S_{n*}+I_{n*})^2},~b_{10} = \frac{\beta I_{n*}^2}{(S_{n*}+I_{n*})^2},\\
~a_{01} = -\frac{rkS_{n*}}{(1+kI_{n*})^2}+r\rho-ra(2\rho I_{n*}+(\rho+1)S_{n*})-\frac{\beta S_{n*}^2}{(S_{n*}+I_{n*})^2},
~b_{01} = \frac{\beta S_{n*}^2}{(S_{n*}+I_{n*})^2} - \mu.
\end{array}\]
The characteristic polynomial is
\begin{equation}\label{1.20}
H_{n*}(\lambda ) = {\lambda ^2} - Tr({J_{n*}})\lambda  + Det({J_{n*}}),
\end{equation}
where
\begin{equation}\label{1.21}
\ Tr({J_{n*}}) = a_{10}+b_{01},\\
\ Det({J_{n*}}) =a_{10}b_{01}-a_{01}b_{10},
\end{equation}
and
\begin{align}\label{1.22}
Tr({J_{n*}}) &= \frac{1}{(1+m)(1+mkS_{n*})}(-ramk(1+m)(2+m(\rho+1))S_{n*}^2\nonumber\\
&+(-\beta km^2-ra(1+m)(2+m(\rho+1)))S_{n*}+r(m+1)-\beta m),
\end{align}
\begin{align}\label{1.23}
Det({J_{n*}}) &=\frac{(\beta-\mu)}{\beta}(\frac{rks(\beta-\mu)}{(1+kI_{n*})^2}+ra(S_{n*}(\rho(\beta-\mu)+\beta+\mu)+I_{n*}(\rho(2\beta-\mu)+\mu)\nonumber\\
&+(\beta-\mu)\mu-\frac{r\mu}{1+kI_{n*}}-r\rho(\beta-\mu)).
\end{align}
If $\ Det({J_{n*}}) =a_{10}b_{01}-a_{01}b_{10} < 0$, then $E_{(1,2)*}=(S_{(1,2)*},I_{(1,2)*})$ is saddle; If $\ Det({J_{n*}}) =a_{10}b_{01}-a_{01}b_{10} > 0$ and $\ Tr({J_{n*}}) = a_{10}+b_{01}<0$, then $E_{2*}=(S_{2*},I_{2*})$ is stable, otherwise, $E_{2*}=(S_{2*},I_{2*})$ is unstable. Next, we discuss the value sign of $\ Tr({J_{n*}}) $ respectively. First, we discuss the sign of $Tr({J_{n*}}) = \frac{1}{(1+m)(1+mkS_{n*})}(-ramk(1+m)(2+m(\rho+1))S_{n*}^2+(-\beta km^2-ra(1+m)(2+m(\rho+1)))S_{n*}+r(m+1)-\beta m)$. From the above, we can easily know that $S_{n*}$ is positive, $-ramk(1+m)(2+m(\rho+1))<0$ and $-\beta km^2-ra(1+m)(2+m(\rho+1))<0$. If $r(m+1)-\beta m\leq 0$, then $Tr({J_{n*}})$ must be less than $0$. If $r(m+1)-\beta m> 0$, then we regard $S_{n*}$ as the solution of $Tr({J_{n*}})=0$, Therefore, it is easy to know that the equation $-ramk(1+m)(2+m(\rho+1))S_{n*}^2+(-\beta km^2-ra(1+m)(2+m(\rho+1)))S_{n*}+r(m+1)-\beta m=0$ has two roots $S^{+}_{n*}, S^{-}_{n*}$, and $S^{+}_{n*}>0>S^{-}_{n*}$. Where
$$ S^{+}_{n*}= \frac{\beta km^2+ra(1+m)(2+m(\rho+1))-\sqrt{\Delta}}{-2ramk(1+m)(2+m(\rho+1))},$$
and
$$\Delta=(\beta km^2+ra(1+m)(2+m(\rho+1)))^2+4(ramk(1+m)(2+m(\rho+1)))(r(m+1)-\beta m).$$
So far, we can conclude that if $S_{n*}>S^{+}_{n*}$, then $Tr({J_{n*}})<0$.
The proof of the theorem is completed.
\end{proof}
\subsection{Example}

In this subsection, we will provide a numerical example to illustrate the case that the positive equilibrium $E_{2*}=(S_{2*},I_{2*})$ is stable. Here, the parameters are $d_1=d_2=0$, $k=0.01$, $\mu=0.55$, and $\rho=0.1$ and see Table~\ref{tab2} for others. Therefore, model~(\ref{1.2}) is in the following form:
\begin{equation}\label{Example}
\left\{\begin{array}{lll}
\frac{d S}{d t}=\frac{0.6 S}{1+0.01 I}+0.06 I-0.6(S+0.1 I)(S+I)-\frac{ S I}{S+I},  \\
\frac{d I}{d t}=\frac{S I}{S+I}-0.55 I. \\
\end{array}\right.
\end{equation}
Under this scenario, $\ Tr({J_{2*}}) =-0.1431<0$ and $\ Det({J_{2*}}) = 0.0452>0$. According to Theorem~\ref{theorem2}, the positive equilibrium $E_{2}*=(S_{2*},I_{2*})=(0.1680,0.1375)$ is stable. The phase trajectory diagram and time series diagram of the numerical example~(\ref{Example}) are shown in Fig.~\ref{fig1}.

\begin{figure}
\centering
\subfigure[]{
\includegraphics[width=7cm]{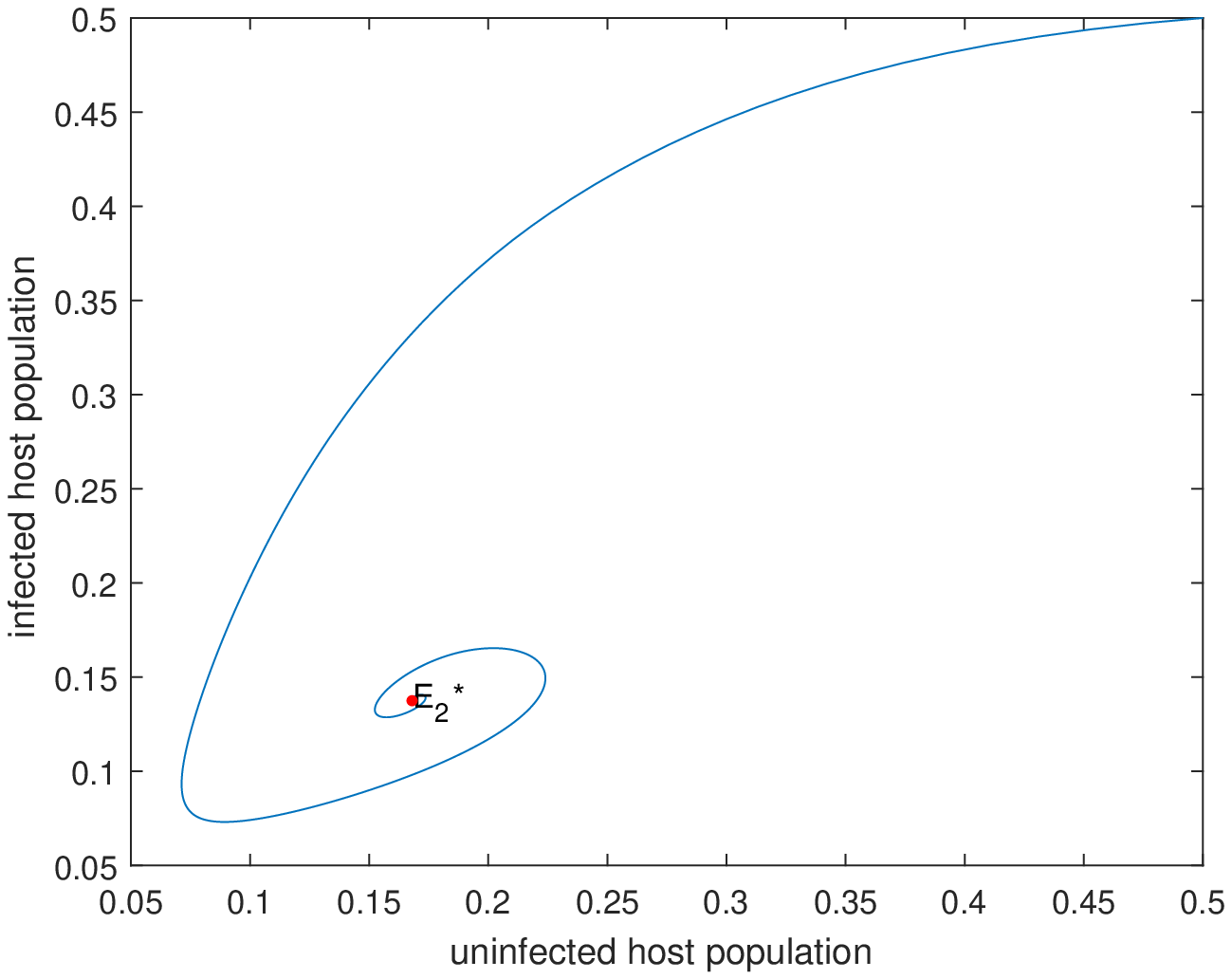}
}
\quad
\subfigure[]{
\includegraphics[width=7cm]{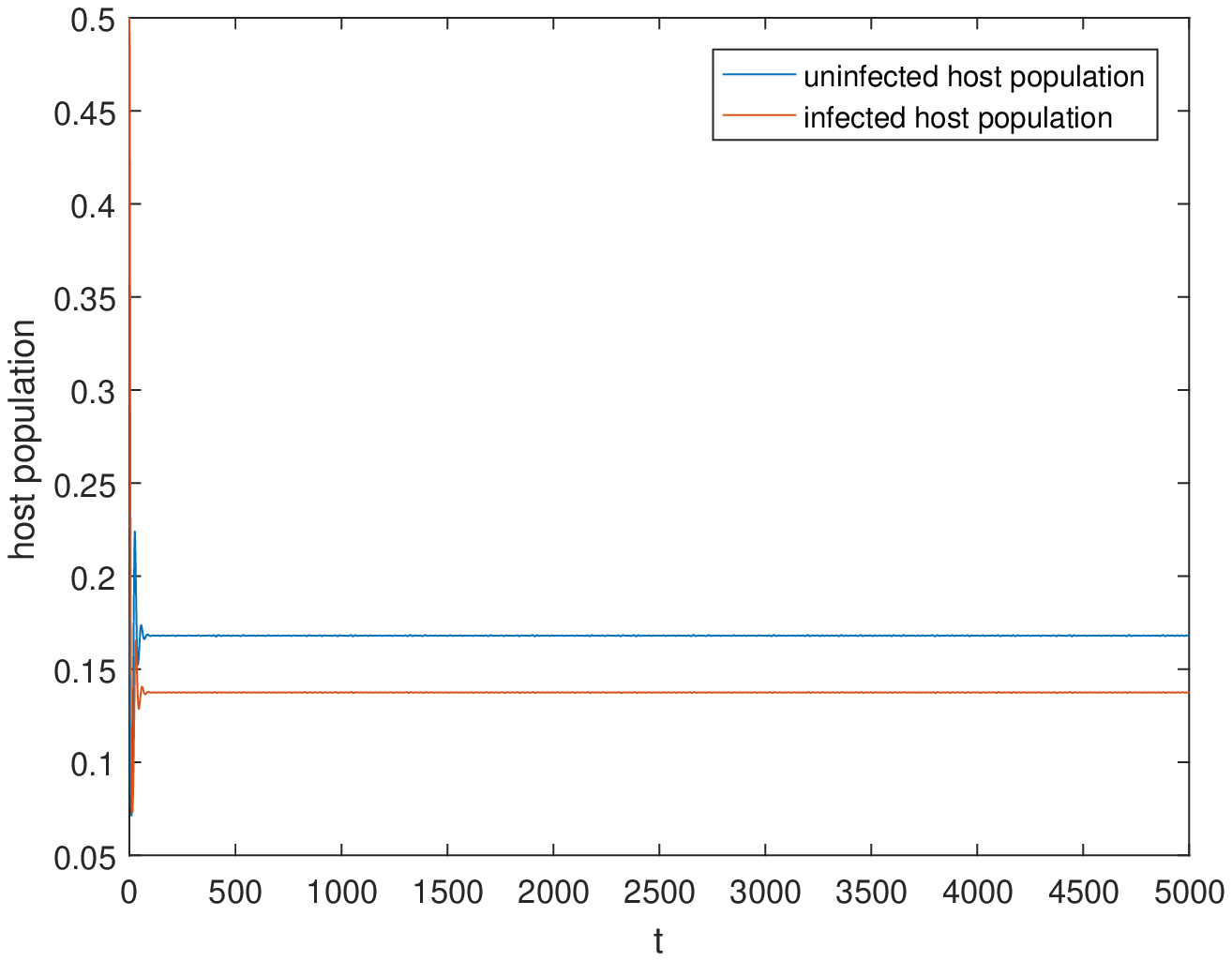}
}

\caption{Phase trajectory diagram of model ~(\ref{1.2}): $E_{2*}=(0.1680,0.1375)$ is stable}
\label{fig1}
\end{figure}

\section{Model with Diffusion}\label{section3}
\subsection{Turing instability}
The purpose of this subsection is to analyze the stability change caused by diffusion, i.e. Turing instability. Similar to references~\cite{ref1,ref5,ref19,ref25,ref26,ref27}, the linearization form of model~(\ref{1.2}) at positive equilibrium $E_{n*} \left(S_{n*}, I_{n*}\right)$ is as follows
\begin{equation}\label{3.1}
\begin{array}{l}
\frac{\partial u}{\partial t}=a_{10} u+a_{01} v-j^{2} d_{1} u, \\
\frac{\partial v}{\partial t}=b_{10} u+b_{01} v-j^{2} d_{2} v,
\end{array}
\end{equation}
where $u=S-S_{n*}, v=I-I_{n*}$, and $(u, v)$ are small perturbations around the equilibrium point $E_{n*} \left(S_{n*}, I_{n*}\right)$ and have the form
\begin{equation}\label{3.2}
\left(\begin{array}{l}
u \\
v
\end{array}\right)=\left(\begin{array}{c}
\epsilon \exp \left(i\mathbf{jr}+\lambda_{j} t\right) \\
\delta \exp \left(i\mathbf{jr}+\lambda_{j} t\right)
\end{array}\right),
\end{equation}
where $\epsilon\ll1$ and $\delta\ll1$, $j=|\mathbf{j}|$ is the wave number, $\mathbf{r}$ is the directional vector and $\lambda_{j}$ is the wave frequency. The characteristic equation of the linearized model~(\ref{3.1}) is given by
\begin{equation}\label{3.3}
(J^J_{n*}-\lambda_{j} I)\left(\begin{array}{l}
u \\
v
\end{array}\right)=0,
\end{equation}
where
\begin{equation}\label{3.4}
J^j_{n*}=\left(\begin{array}{cc}
a_{10}-d_{1} j^{2} & a_{01} \\
b_{10} & b_{01}-d_{2} j^{2}
\end{array}\right).
\end{equation}
Eq.~(\ref{3.3}) can be written as
\begin{equation}\label{3.5}
\lambda_{j}^{2}-\operatorname{Tr}\left(J^j_{n*}\right) \lambda_{j}+\operatorname{Det}\left(J^j_{n*}\right)=0,
\end{equation}
where
\begin{align} \label{3.6}
\operatorname{Tr}\left(J^j_{n*}\right) &=\operatorname{Tr}\left(J_{n*}\right)-\left(d_{1}+d_{2}\right) j^{2}=a_{10}+b_{01}-\left(d_{1}+d_{2}\right) j^{2}, \nonumber\\
\operatorname{Det}\left(J^j_{n*}\right) &=\left(a_{10}-d_{1} j^{2}\right)\left(b_{01}-d_{2} j^{2}\right)-a_{01} b_{10} \nonumber\\
 &=j^{4} d_{1} d_{2}-j^{2}\left(a_{10} d_{2}+b_{01} d_{1}\right)+\operatorname{det}\left(J_{n*}\right),
\end{align}
and $J_{n*}$ is given in Eq.~(\ref{1.19}). The roots of the characteristic Eq.~(\ref{3.4}) are:
\begin{equation}\label{3.7}
\lambda_{j}^{\pm}=\frac{\operatorname{Tr}\left(J^j_{n*}\right) \pm \sqrt{\left(\operatorname{Tr}\left(J^j_{n*}\right)\right)^{2}-4 \operatorname{Det}\left(J^j_{n*}\right)}}{2} .
\end{equation}
When the positive equilibrium of model~(\ref{1.2}) is stable without diffusion, $\operatorname{Tr}\left(J^j_{n*}\right)$ is negative and
\begin{equation}\label{3.8}
\operatorname{Det}\left(J^j_{n*}\right)>0.
\end{equation}
The diffusive model~(\ref{1.2}) will be locally asymptotically stable. We can find that the positive equilibrium is stable without diffusion $(j=0)$, but it may be unstable with diffusion $(j \neq 0)$, i.e., Turing bifurcation. Obviously, $\operatorname{Tr}\left(J^j_{n*}\right) < \operatorname{Tr}\left(J_{n*}\right)<0$, and therefore, the stability of positive equilibrium will change only when
\begin{align}\label{3.9}
\operatorname{Det}\left(J^j_{n*}\right) &=\left(a_{10}-d_{1} j^{2}\right)\left(b_{01}-d_{2} j^{2}\right)-a_{01} b_{10} \nonumber\\
 &=j^{4} d_{1} d_{2}-j^{2}\left(a_{10} d_{2}+b_{01} d_{1}\right)+\operatorname{det}\left(J_{n*}\right)
\end{align}
is negative.
The minimum of $\operatorname{Det}\left(J^j_{n*}\right)$ occurs at $j^{2}=j_{cr}^{2}$, where
\begin{equation}\label{3.10}
j_{c r}^{2}=\frac{a_{10} d_{2}+b_{01} d_{1}}{2 d_{1} d_{2}}>0.
\end{equation}
As $a_{10}+b_{01}<0,~j_{cr}^{2}$ is real and $d_{1}, d_{2}$ are always positive, we must have $a_{10} b_{01}<0$. Thus, a sufficient condition for instability is $\operatorname{Det}\left(j_{c r}^{2}\right)<0$, where
\begin{equation}\label{3.11}
\operatorname{Det}\left(j_{c r}^{2}\right)=\left(a_{10} b_{01}-a_{01} b_{10}\right)-\frac{\left(a_{10} d_{2}+b_{01} d_{1}\right)^{2}}{4 d_{1} d_{2}}.
\end{equation}
Therefore, the condition of Turing instability is as follows:
\begin{equation}\label{3.12}
a_{10} d_{2}+b_{01} d_{1}>2 \sqrt{d_{1} d_{2}} \sqrt{a_{10} b_{01}-a_{01} b_{10}}.
\end{equation}
Obviously, when $j^{2}\in(j_{-}^{2}, j_{+}^{2})$, $\operatorname{Det}\left(j_{c r}^{2}\right)<0$, where
\begin{equation}\label{3.13}
\begin{array}{l}
j_{\pm}^{2}=\frac{\left(a_{10} d_{2}+b_{01} d_{1}\right)\pm\sqrt{\left(a_{10} d_{2}+b_{01} d_{1}\right)^{2}-4 d_{1} d_{2} \operatorname{Det}\left(j_{c r}^{2}\right)}}{2 d_{1} d_{2}}.
\end{array}
\end{equation}

\subsection{Weakly nonlinear analysis}
It is pointed out that the amplitude equation is usually used to describe the evolution of dynamical system near bifurcation, showing a critical slowing down~\cite{ref23}. Note that when the control parameters are close to the threshold of Turing bifurcation, the eigenvalues related to critical modes are close to zero, that is, the critical mode is slow mode, then the whole dynamics can be attributed to the dynamic of active slow mode~\cite{ref19,ref20,ref24,ref25,ref26}. In this section, we will use the standard multiscale analysis to deduce the amplitude equation. We rewrite the transformed form of model~(\ref{1.2}) at the positive spatially homogeneous steady state $E_{n*} \left(S_{2*}, I_{2*}\right)$ as follows and denote by ${\left( {U,V} \right)^T}$ the perturbation solution ${\left( {U-{S_{2*}},V-{I_{2*}}} \right)^T}$ of model~(\ref{1.2}).
\begin{equation}\label{4.1}
\frac{{\partial X}}{{\partial t}} = LX + \mathbf{H},
\end{equation}
where $X = {(U,V)^T}$. The linear operator $L$ can be defined as:
\begin{equation}\label{4.2}
L = \left( {\begin{array}{*{20}{c}}
{a_{10}+d_1{\Delta}}&{a_{01}}\\
{{b_{10}}}&{{b_{01}}+d_2{\Delta}}
\end{array}} \right),
\end{equation}
and $\mathbf{H}$ be given by
\begin{equation}\label{4.3}
\mathbf{H} = \left( {\begin{array}{*{20}{c}}
{{a_{20}}{U^2} + {a_{11}}UV + {a_{02}}{V^2} + {a_{30}}{U^3} + {a_{21}}{U^2}V + {a_{12}}U{V^2} + {a_{03}}{V^3}+ o({\varepsilon ^3})}\\
{{b_{20}}{U^2} + {b_{11}}UV + {b_{02}}{V^2} + {b_{30}}{U^3} + {b_{21}}{U^2}V + {b_{12}}U{V^2} + {b_{03}}{V^3}+ o({\varepsilon ^3})}
\end{array}} \right),
\end{equation}
where\\
${a_{20}} = -ra+\frac{\beta I_{2*}^2}{(S_{2*}+I_{2*})^3}$,
${a_{11}} = -ra(\rho+1)-\frac{2\beta S_{2*} I_{2*}}{(S_{2*}+I_{2*})^3}-\frac{rk}{(1+kI_{2*})^2}$,\\
${a_{02}} = \frac{rk^2S_{2*}}{(1+kI_{2*})^3}-ra\rho-\frac{\beta S_{2*} I_{2*}}{(S_{2*}+I_{2*})^3}$,
${b_{20}} = -\frac{\beta I_{2*}^2}{(S_{2*}+I_{2*})^3}$,\\
${b_{11}} = \frac{2\beta S_{2*} I_{2*}}{(S_{2*}+I_{2*})^3}$,
${b_{02}} = -\frac{\beta S_{2*}^2}{(S_{2*}+I_{2*})^3}$,\\
${a_{30}} = -\frac{\beta I_{2*}^2}{(S_{2*}+I_{2*})^4}$,
${a_{21}} =  \frac{\beta I_{2*} (2 S_{2*}-I_{2*})}{3(S_{2*}+I_{2*})^4}$,\\
${a_{12}} = \frac{1}{3}(\frac{2\beta S_{2*} (2 I_{2*}-S_{2*})}{3(S_{2*}+I_{2*})^4} + \frac{rk^2}{(1+kI_{2*})^3)})$,
${a_{03}} = \frac{1}{3}(\frac{\beta S_{2*} (2 I_{2*}-S_{2*})}{(S_{2*}+I_{2*})^4} - \frac{3rk^3S_{2*}}{(1+kI_{2*})^4})$,\\
${b_{30}} = \frac{\beta S_{2*} I_{2*}^2}{(S_{2*}+I_{2*})^4}$,
${b_{21}} = \frac{\beta I_{2*} (I_{2*}-2S_{2*})}{3(S_{2*}+I_{2*})^4}$,\\
${b_{12}} = \frac{\beta S_{2*} (S_{2*}-2I_{2*})}{3(S_{2*}+I_{2*})^4}$,
and ${b_{03}} = \frac{\beta S_{2*}^2}{(S_{2*}+I_{2*})^4}$.\\
Then, we expand the control parameter $\mu$ near the Turing bifurcation threshold as follows
\begin{equation}\label{4.4}
{\mu _T} - \mu  = \varepsilon {\mu_1} + {\varepsilon ^2}{\mu_2} + {\varepsilon ^3}{\mu_3} + o({\varepsilon ^3}),
\end{equation}
where $\left| \varepsilon  \right| \ll 1$. Similar to Eq.~(\ref{4.4}), we expand the solution $X$, linear operator $L$ and the nonlinear term $\mathbf{H}$ into Taylor series at $\varepsilon  = 0$.
\begin{equation}\label{4.5}
X = \varepsilon \left( {\begin{array}{*{20}{c}}
{{U_1}}\\
{{V_1}}
\end{array}} \right) + {\varepsilon ^2}\left( {\begin{array}{*{20}{c}}
{{U_2}}\\
{{V_2}}
\end{array}} \right) + {\varepsilon ^3}\left( {\begin{array}{*{20}{c}}
{{U_3}}\\
{{V_3}}
\end{array}} \right) + o({\varepsilon ^3}),
\end{equation}
\begin{equation}\label{4.6}
\mathbf{H} = {\varepsilon ^2}{h_2} + {\varepsilon ^3}{h_3} + o({\varepsilon ^3}),
\end{equation}
\begin{equation}\label{4.7}
L = {L_T} + ({\mu _T} - \mu)M,
\end{equation}
where
\begin{equation}\label{4.8}
{h_2} = \left( {\begin{array}{*{20}{c}}
{h_2^1}\\
{h_2^2}
\end{array}} \right) = \left( {\begin{array}{*{20}{c}}
{a_{_{20}}^TU_{^1}^2 + a_{_{11}}^T{U_1}{V_1} + a_{_{02}}^TV_{^1}^2}\\
{b_{_{20}}^TU_{^1}^2 + b_{_{11}}^T{U_1}{V_1} + b_{_{02}}^TV_{^1}^2}
\end{array}} \right),
\end{equation}
and
\begin{align}\label{4.9}
{h_3} &= \left( {\begin{array}{*{20}{c}}
{h_3^1}\\
{h_3^2}
\end{array}} \right) \nonumber\\
&= \left( {\begin{array}{*{20}{c}}
{a_{_{30}}^T{U^3} + a_{_{21}}^T{U^2}V + a_{_{12}}^TU{V^2} + a_{_{03}}^T{V^3} + 2(a_{_{20}}^T{U_1}{U_2} + a_{_{02}}^T{V_1}{V_2}) + a_{_{11}}^T({U_1}{V_2} + {V_1}{U_2})}\\
{b_{_{30}}^T{U^3} + b_{_{21}}^T{U^2}V + b_{_{12}}^TU{V^2} + b_{_{03}}^T{V^3} + 2(b_{_{20}}^T{U_1}{U_2} + b_{_{02}}^T{V_1}{V_2}) + b_{_{11}}^T({U_1}{V_2} + {V_1}{U_2})}
\end{array}} \right)\nonumber\\
&- \left( {\begin{array}{*{20}{c}}
{{\mu _1}({a_{20}}^\prime U_{^1}^2 + {a_{11}}^\prime {U_1}{V_1} + {a_{02}}^\prime V_{^1}^2)}\\
{{\mu _1}({b_{20}}^\prime U_{^1}^2 + {b_{11}}^\prime {U_1}{V_1} + {b_{02}}^\prime V_{^1}^2)}
\end{array}} \right),
\end{align}
with $a_{i j}^{\prime}= \frac{d a_{i j}}{d \mu}$, $b_{i j}^{\prime}=\frac{d b_{i j}}{d \mu}$, ($i,j=0,1,2$). Notice that the linear operator
\begin{equation}\label{4.10}
L = {L_T} + ({\mu _T} - \mu)M,
\end{equation}
where
\begin{equation}\label{4.11}
{L_T} = {\left( {\begin{array}{*{20}{c}}
{a_{10}+d_1{\Delta}}&{a_{01}}\\
{{b_{10}}}&{{b_{01}}+d_2{\Delta}}
\end{array}} \right)_{\mu  = {\mu _T}}}
\end{equation}
and
\begin{equation}\label{4.12}
M = \left( {\begin{array}{*{20}{c}}
{{m_{11}}}&{{m_{12}}}\\
{{m_{21}}}&{{m_{22}}}
\end{array}} \right)
\end{equation}
with
${m_{11}} = \frac{da_{10}}{d\mu}$, ${m_{12}} = \frac{da_{01}}{d\mu}$, ${m_{21}} = \frac{db_{10}}{d\mu}$ and ${m_{22}} = \frac{db_{01}}{d\mu}$ at $U = {S_{2*}}$, $V = {I_{2*}}$.\\
Finally, the following multiple time scales are introduced
\begin{equation}\label{4.13}
\frac{\partial }{{\partial t}} = \varepsilon \frac{\partial }{{\partial {T_1}}} + {\varepsilon ^2}\frac{\partial }{{\partial {T_2}}} + o({\varepsilon ^2}).
\end{equation}
Substituting Eqs.~(\ref{4.2})$-$(\ref{4.13}) into Eq.~(\ref{4.1}) and  expanding it with respect to different orders of ${\varepsilon ^i},(i = 1,2,3)$,
\begin{equation}\label{4.14}
\begin{array}{l}
\varepsilon :{L_T}\left( {\begin{array}{*{20}{c}}
{{U_1}}\\
{{V_1}}
\end{array}} \right) = 0,\\
{\varepsilon ^2}:{L_T}\left( {\begin{array}{*{20}{c}}
{{U_2}}\\
{{V_2}}
\end{array}} \right) = \frac{\partial }{{\partial {T_1}}}\left( {\begin{array}{*{20}{c}}
{{U_1}}\\
{{V_1}}
\end{array}} \right) - {\mu _1}M\left( {\begin{array}{*{20}{c}}
{{U_1}}\\
{{V_1}}
\end{array}} \right) - {h_2},\\
{\varepsilon ^3}:{L_T}\left( {\begin{array}{*{20}{c}}
{{U_3}}\\
{{V_3}}
\end{array}} \right) = \frac{\partial }{{\partial {T_1}}}\left( {\begin{array}{*{20}{c}}
{{U_2}}\\
{{V_2}}
\end{array}} \right) + \frac{\partial }{{\partial {T_2}}}\left( {\begin{array}{*{20}{c}}
{{U_1}}\\
{{V_1}}
\end{array}} \right) - {\mu _1}M\left( {\begin{array}{*{20}{c}}
{{U_2}}\\
{{V_2}}
\end{array}} \right) - {\mu _2}M\left( {\begin{array}{*{20}{c}}
{{U_1}}\\
{{V_1}}
\end{array}} \right) - {h_3}.
\end{array}
\end{equation}
Next, we find the amplitude equation by solving Eq.~(\ref{4.14}). Since $L_T$ has an eigenvector associated with the zero eigenvalue
${\left( {f,1} \right)^T}$ with $f=\frac{a_{10} d_2/d_1-b_{01}}{2 b_{10}}$. The general solution of the first component of Eq.~(\ref{4.14}) can be obtained
\begin{equation}\label{4.15}
\left( {\begin{array}{*{20}{c}}
{{U_1}}\\
{{V_1}}
\end{array}} \right) = \left( {\begin{array}{*{20}{c}}
f\\
1
\end{array}} \right)\left( {\sum\limits_{j = 1}^3 {{W_j}{e^{i{\mathbf{k}_j} \cdot r}} + c.c.} } \right),
\end{equation}
where ${W_j}$ is the amplitude of the mode ${e^{i{\mathbf{k}_j} \cdot r}}$. The second component of Eq.~(\ref{4.14}) is nonhomogeneous, the adjoint operator of ${L_T}$ is $L_{_T}^*$, and it has the following zero eigenvectors form
\begin{equation}\label{4.16}
\left( {\begin{array}{*{20}{c}}
1\\
g
\end{array}} \right){e^{i{\mathbf{k}_j} \cdot r}} + c.c.,\begin{array}{*{20}{c}}
{}&{j = 1,2,3},
\end{array}
\end{equation}
where $g=\frac{b_{01}-a_{10} d_2/d_1}{2 b_{10}d_2/d_1}$. Let
\begin{equation}\label{4.17}
\left( {\begin{array}{*{20}{c}}
{{F_U}}\\
{{F_V}}
\end{array}} \right) = \frac{\partial }{{\partial {T_1}}}\left( {\begin{array}{*{20}{c}}
{{U_1}}\\
{{V_1}}
\end{array}} \right) - {\mu _1}\left( {\begin{array}{*{20}{c}}
{{m_{11}}{U_1} + {m_{12}}{V_1}}\\
{{m_{21}}{U_1} + {m_{22}}{V_1}}
\end{array}} \right) - \left( {\begin{array}{*{20}{c}}
{h_2^1}\\
{h_2^2}
\end{array}} \right).
\end{equation}
Then, with the help of Fredholm solvability condition
\begin{equation}\label{4.18}
\left( {1,g} \right)\left( {\begin{array}{*{20}{c}}
{F_U^j}\\
{F_V^j}
\end{array}} \right) = 0,
\end{equation}
where $F_U^j$ and $F_V^j$ are the coefficients of ${e^{i{\mathbf{k}_j} \cdot r}}$ in ${F_U}$ and ${F_V}$, respectively. It follows after some routine calculation that for ${j_l} = 1,2,3$ and ${j_l} \ne {l_m}$ if $l \ne m$
\begin{equation}\label{4.19}
\left( {f + g} \right)\frac{{\partial {W_{j1}}}}{{\partial {T_1}}} = {\mu _1}{h_3}{W_{j1}} - 2({h_1} + g{h_2}){\bar W_{j2}}{\bar W_{j3}},
\end{equation}
where
\[{h_1} =  - ({f^2}a_{20}^T + fa_{11}^T + a_{02}^T),\]
\[{h_2} =  - ({f^2}b_{20}^T + fb_{11}^T + b_{02}^T),\]
\[{h_3} = f{m_{11}} + {m_{12}} + g(f{m_{21}} + {m_{22}}).\]
Note that the forms of $U_1$ and $V_1$ are given by Eq.~(\ref{4.15}). We have a particular solution for the second component of Eq.~(\ref{4.14}) as follows:
\begin{equation}\label{4.20}
\begin{array}{l}
\left( {\begin{array}{*{20}{c}}
{{U_2}}\\
{{V_2}}
\end{array}} \right) = \left( {\begin{array}{*{20}{c}}
{{{\bar U}_0}}\\
{{{\bar V}_0}}
\end{array}} \right) + \sum\limits_{j = 1}^3 {\left( {\begin{array}{*{20}{c}}
{{{\bar U}_j}}\\
{{{\bar V}_j}}
\end{array}} \right)} {e^{i{\mathbf{k}_j} \cdot r}} + \sum\limits_{j = 1}^3 {\left( {\begin{array}{*{20}{c}}
{{{\bar U}_{jj}}}\\
{{{\bar V}_{jj}}}
\end{array}} \right)} {e^{i2{\mathbf{k}_j} \cdot r}}\\
 + \left( {\begin{array}{*{20}{c}}
{{{\bar U}_{12}}}\\
{{{\bar V}_{12}}}
\end{array}} \right){e^{i({\mathbf{k}_1} - {\mathbf{k}_2}) \cdot r}} + \left( {\begin{array}{*{20}{c}}
{{{\bar U}_{23}}}\\
{{{\bar V}_{23}}}
\end{array}} \right){e^{i({\mathbf{k}_2} - {\mathbf{k}_3}) \cdot r}}\\
 + \left( {\begin{array}{*{20}{c}}
{{{\bar U}_{31}}}\\
{{{\bar V}_{31}}}
\end{array}} \right){e^{i({\mathbf{k}_3} - {\mathbf{k}_1}) \cdot r}} + c.c.
\end{array}
\end{equation}
with the coefficients given below at ${\mu _T} = \mu$\\
\begin{equation}\label{4.21}
\begin{array}{l}
\left( {\begin{array}{*{20}{c}}
{{{\bar U}_0}}\\
{{{\bar V}_0}}
\end{array}} \right) = \left( {\begin{array}{*{20}{c}}
{\frac{{2({b_{01}}{h_1} - {a_{01}}{h_2})}}{{{\Delta _0}}}}\\
{\frac{{2({a_{10}}{h_2} - {b_{10}}{h_1})}}{{{\Delta _0}}}}
\end{array}} \right)\sum\limits_{j = 1}^3 {{{\left| {{W_j}} \right|}^2}} \\
 \equiv \left( {\begin{array}{*{20}{c}}
{{z_{U0}}}\\
{{z_{V0}}}
\end{array}} \right)\sum\limits_{j = 1}^3 {{{\left| {{W_j}} \right|}^2}} ,{{\bar U}_j} = f{{\bar V}_{j,}}\left( {\begin{array}{*{20}{c}}
{{X_{jj}}}\\
{{Y_{jj}}}
\end{array}} \right) \equiv \left( {\begin{array}{*{20}{c}}
{{z_{U1}}}\\
{{z_{V1}}}
\end{array}} \right)W_j^2\\
 = \frac{1}{{({a_{10}} - 4{d_1}j_{cr}^2)({b_{01}} - 4{d_2}j_{cr}^2) - {a_{01}}{b_{10}}}} \times \left( {\begin{array}{*{20}{c}}
{({b_{01}} - 4{d_2}j_{cr}^2){h_1} - {a_{01}}{h_2}}\\
{({a_{10}} - 4{d_1}j_{cr}^2){h_2} - {b_{10}}{h_1}}
\end{array}} \right)W_j^2
\end{array}
\end{equation}
and
\begin{equation}\label{4.22}
\begin{array}{l}
\left( {\begin{array}{*{20}{c}}
{{X_{jk}}}\\
{{Y_{jk}}}
\end{array}} \right) \equiv \left( {\begin{array}{*{20}{c}}
{{z_{U2}}}\\
{{z_{V2}}}
\end{array}} \right){W_j}{{\bar W}_k}\\
 = \frac{1}{{({a_{10}} - 3{d_1}j_{cr}^2)({b_{01}} - 3{d_2}j_{cr}^2) - {a_{01}}{b_{10}}}} \times \left( {\begin{array}{*{20}{c}}
{({b_{01}} - 3{d_2}j_{cr}^2){h_1} - {a_{01}}{h_2}}\\
{({a_{10}} - 3{d_1}j_{cr}^2){h_2} - {b_{10}}{h_1}}
\end{array}} \right){W_j}{{\bar W}_k}.
\end{array}
\end{equation}
For the third component of Eq.~(\ref{4.14}), we apply Fredholm solvability condition again, and get for $j = 1$
\begin{align}\label{4.23}
(f + g)\left( {\frac{{\partial {V_j}}}{{\partial {T_1}}} + \frac{{\partial {W_j}}}{{\partial {T_2}}}} \right) &= {h_3}({\mu _1}{V_j} + {\mu _2}{W_j}) + {h_4}{{\bar W}_l}{{\bar W}_m} + H({{\bar V}_l}{{\bar W}_m} + {{\bar V}_m}{{\bar W}_l})\nonumber\\
 &- ({G_1}{\left| {{W_1}} \right|^2} + {G_2}({\left| {{W_2}} \right|^2} + {\left| {{W_3}} \right|^2})){W_j}
\end{align}
with
\begin{equation}\label{4.24}
{h_4} =  - 2{\mu _1}({a_{20}}^\prime {f^2} + {a_{11}}^\prime f + {a_{02}}^\prime  + g({b_{20}}^\prime {f^2} + {a_{11}}^\prime f + {b_{02}}^\prime )),
\end{equation}
\begin{equation}\label{4.25}
H =  - 2({h_1} + g{h_2}),
\end{equation}
\begin{align}\label{4.26}
{G_1} =  &- (3{a_{30}}{f^3} + 2{a_{11}}f{z_{V0}} + {a_{11}}f{z_{V1}} + 4{a_{20}}f{z_{U0}} + 2{a_{20}}f{z_{U1}}\nonumber\\
 &+ 3{a_{21}}{f^2} + 4{a_{02}}{z_{V0}} + 2{a_{02}}{z_{V1}} + 2{a_{11}}{z_{U0}}\nonumber\\
 &+ {a_{11}}{z_{U1}} + 3{a_{12}}f + 3{a_{03}})\nonumber\\
 &- g({b_{30}}{f^3} + 2{b_{11}}f{z_{V0}} + {b_{11}}f{z_{V1}} + 4{b_{20}}f{z_{U0}} + 2{b_{20}}f{z_{U1}}\nonumber\\
 &+ 3{b_{21}}{f^2} + 4{b_{02}}{z_{V0}} + 2{b_{02}}{z_{V1}} + 2{b_{11}}{z_{U0}}\nonumber\\
 &+ {b_{11}}{z_{U1}} + 3{b_{12}}f + 3{b_{03}}),
\end{align}
and
\begin{align}\label{4.27}
{G_2} =  &- (6{a_{30}}{f^3} + 2{a_{11}}f{z_{V0}} + {a_{11}}f{z_{V2}} + 4{a_{20}}f{z_{U0}} + 2{a_{20}}f{z_{U2}}\nonumber\\
 &+ 6{a_{21}}{f^2} + 4{a_{02}}{z_{V0}} + 2{a_{02}}{z_{V2}} + 2{a_{11}}{z_{U0}}\nonumber\\
 &+ {a_{11}}{z_{U2}} + 6{a_{12}}f + 6{a_{03}})\nonumber\\
 &- g(6{b_{30}}{f^3} + 2{b_{11}}f{z_{V0}} + {b_{11}}f{z_{V2}} + 4{b_{20}}f{z_{U0}} + 2{b_{20}}f{z_{U2}}\nonumber\\
 &+ 6{b_{21}}{f^2} + 4{b_{02}}{z_{V0}} + 2{b_{02}}{z_{V2}} + 2{b_{11}}{z_{U0}}\nonumber\\
 &+ {b_{11}}{z_{U2}} + 6{b_{12}}f + 6{b_{03}}).
\end{align}
The amplitude equation Eq.~(\ref{4.28}) of amplitude $A_j$ is given as follows, by combining Eq.~(\ref{4.19}) with Eq.~(\ref{4.23})
\begin{equation}\label{4.28}
{\tau _0}\frac{{\partial {A_j}}}{{\partial t}} = \mu {A_j} + h{{\bar A}_l}{{\bar A}_m}
 - ({g_1}{\left| {{A_1}} \right|^2} + {g_2}({\left| {{A_2}} \right|^2} + {\left| {{A_3}} \right|^2})){A_j}
\end{equation}
where
\begin{equation}\label{4.29}
{\tau _0} = \frac{{f + g}}{{{\mu _T}[f{m_{11}} + {m_{12}} + g(f{m_{21}} + {m_{22}})]}},
\end{equation}
\begin{equation}\label{4.30}
\sigma = \frac{{{\mu _T} - \mu }}{{{\mu _T}}},
\end{equation}
\begin{equation}\label{4.31}
h = \frac{H}{{{\mu _T}[f{m_{11}} + {m_{12}} + g(f{m_{21}} + {m_{22}})]}},
\end{equation}
\begin{equation}\label{4.32}
{g_i} = \frac{{{G_i}}}{{{\mu _T}[f{m_{11}} + {m_{12}} + g(f{m_{21}} + {m_{22}})]}}.
\end{equation}
It is worth noting that Eq.~(\ref{4.28}) is in complex form. According to the method of reference~\cite{ref21}, for the convenience of discussion, we convert it into real form with the help of ${A_j} = {\rho _j}\exp (i{\varphi _j})$, as follow:
\begin{equation}\label{4.33}
\left\{ {\begin{array}{*{20}{c}}
{{\tau _0}\frac{{\partial \varphi }}{{\partial t}} =  - h\frac{{\rho _1^2\rho _2^2 + \rho _1^2\rho _3^2 + \rho _2^2\rho _3^2}}{{{\rho _1}{\rho _2}{\rho _3}}}\sin \varphi },\\
{{\tau _0}\frac{{\partial {\rho _1}}}{{\partial t}} =  \sigma {\rho _1} + h{\rho _2}{\rho _3}\cos \varphi  - {g_1}\rho _1^3 - {g_2}(\rho _3^2 + \rho _2^2){\rho _1}},\\
{{\tau _0}\frac{{\partial {\rho _2}}}{{\partial t}} =  \sigma {\rho _2} + h{\rho _1}{\rho _3}\cos \varphi  - {g_1}\rho _2^3 - {g_2}(\rho _3^2 + \rho _1^2){\rho _2}},\\
{{\tau _0}\frac{{\partial {\rho _3}}}{{\partial t}} =  \sigma {\rho _3} + h{\rho _2}{\rho _1}\cos \varphi  - {g_1}\rho _3^3 - {g_2}(\rho _1^2 + \rho _2^2){\rho _3}},
\end{array}} \right.
\end{equation}
where ${\rho _j}$ are the real amplitudes and ${\varphi _j}$ are the phase angles, and $\varphi  = {\varphi _1} + {\varphi _2} + {\varphi _3}$. Since we only focus on the stable steady state and notice the fact that $h{\rho _i} \ne 0$, according to first equation of~(\ref{4.33}), we obtain $\varphi  = 0$ or $\pi $. In addition, we know that ${\tau _0} > 0$, which means that when $h > 0$, the state corresponding to $\varphi  = 0$ is stable, while when $h < 0$, the state corresponding to $\varphi  = \pi $ is stable. After that, model of amplitude equations~(\ref{4.33}) becomes
\begin{equation}\label{4.34}
\left\{ {\begin{array}{*{20}{c}}
{{\tau _0}\frac{{\partial {\rho _1}}}{{\partial t}} = \sigma {\rho _1} + \left| h \right|{\rho _2}{\rho _3} - {g_1}\rho _1^3 - {g_2}(\rho _3^2 + \rho _2^2){\rho _1}},\\
{{\tau _0}\frac{{\partial {\rho _2}}}{{\partial t}} = \sigma {\rho _2} + \left| h \right|{\rho _1}{\rho _3} - {g_1}\rho _2^3 - {g_2}(\rho _3^2 + \rho _1^2){\rho _2}},\\
{{\tau _0}\frac{{\partial {\rho _3}}}{{\partial t}} = \sigma {\rho _3} + \left| h \right|{\rho _2}{\rho _1} - {g_1}\rho _3^3 - {g_2}(\rho _1^2 + \rho _2^2){\rho _3}}.
\end{array}} \right.
\end{equation}
The amplitude equations are usually valid only when the control parameter $\mu$ is in Turing space. It is not difficult to see that the above ordinary differential equations~(\ref{4.34}) have five equilibrium points, corresponding to five steady states ~\cite{ref19,ref20,ref21,ref22,ref24,ref25,ref26}. Considering the symmetry of the model, we have
\begin{itemize}
  \item Model~(\ref{4.34}) always makes an equilibrium ${E_0} = (0,0,0)$ is stable for $\sigma  < {\sigma _2} = 0$ and unstable for $\sigma  > {\sigma _2}$.
  \item Model~(\ref{4.34}) has an equilibrium ${E_s} = (\sqrt {\frac{\sigma }{{{g_1}}}},0,0)$ corresponding to stripe patterns, which is stable for $\mu  > {\sigma _3} = \frac{{{h^2}{g_1}}}{{{{({g_2} - {g_1})}^2}}}$ and unstable for $\sigma  > {\sigma _3}$.
  \item  Model~(\ref{4.34}) has an equilibrium ${E_h} = ({\rho _{h}^{1}}^\pm,{\rho _{h}^{2}}^\pm,{\rho _{h}^{3}}^\pm)$ corresponding to hexagon patterns, with $\varphi  = 0$ or $\varphi  = \pi$, and ${\rho _{h}^{(1,2,3)}}^+ = \frac{{\left| h \right| + \sqrt {{h^2} + 4({g_1} + 2{g_2})\sigma } }}{{2({g_1} + 2{g_2})}}$ is stable for $\sigma  < {\sigma _4} = \frac{{{h^2}(2{g_1} + {g_2})}}{{{{({g_2} - {g_1})}^2}}}$, ${\rho _{h}^{(1,2,3)}}^-  = \frac{{\left| h \right| - \sqrt {{h^2} + 4({g_1} + 2{g_2})\sigma } }}{{2({g_1} + 2{g_2})}}$ is unstable. Where
\[{\rho _{h}^{1}}^\pm ={\rho _{h}^{2}}^\pm = {\rho _{h}^{3}}^\pm =\frac{{\left| h \right| \pm \sqrt {{h^2} + 4({g_1} + 2{g_2})\sigma } }}{{2({g_1} + 2{g_2})}}.\]
  \item Model~(\ref{4.34}) has an equilibrium ${E_m} = ({\rho _{m}^1},{\rho _{m}^2},{\rho _{m}^3})$ corresponding to mixed patterns, with ${g_1} > {g_2}$, $\sigma > {g_1}\rho _1^2$ and which is unstable. Where
\[{\rho _{m}^1} = \frac{{\left| h \right|}}{{{g_2} - {g_1}}},{\rho _{m}^2} = {\rho _{m}^3} = \sqrt {\frac{{\sigma  - {g_1}\rho _1^2}}{{{g_2} + {g_1}}}} .\]
\end{itemize}

\section{Numerical simulations}\label{section4}
In this section, we simulate the spatial model~(\ref{1.2}) by using the two-dimensional positive bounded domain $\Omega \subseteq \mathrm{R}_{+}^{2}$ with the initial condition of a positive random number between $0$ and $1$ with zero-flux boundary. We discretize the time and space, respectively, and distribute the space in the grid of $L_{x}=L_{y}=20$, where the space step is $\Delta h=0.2$, and the time step is $\Delta t=0.01$. The reaction terms are discretized by Euler scheme. The standard five-point explicit finite difference scheme is adopted for the Laplace operator (i.e., diffusion term)~\cite{ref1,ref24,ref26,ref27,ref34}, as follows:
\begin{align}
\Delta_{\Delta h} S_{i, j}^{n}&=\frac{S_{i+1, j}^{n}+S_{i-1, j}^{n}+S_{i, j+1}^{n}+S_{i, j-1}^{n}-4 S_{i, j}^{n}}{\Delta h^{2}},\\
\Delta_{\Delta h} I_{i, j}^{n}&=\frac{I_{i+1, j}^{n}+I_{i-1, j}^{n}+I_{i, j+1}^{n}+I_{i, j-1}^{n}-4 I_{i, j}^{n}}{\Delta h^{2}}.
\end{align}
The parameter values are shown in Table~\ref{tab2}.
\begin{table*}[ht]
    \centering
    \caption{Description of parameters and their fixed values for model~(\ref{1.2}).}
\begin{tabular}{lll}
\hline\hline Parameter & Description & Value \\
\hline$r$ & maximum birth rate of the hosts & $0.60$~\cite{ref1}\\
$k$ & level of fear which drives anti-infected host behavior of uninfected host & control variable \\
$\rho$ & reducing reproduction ability of infected hosts & control variable\\
$1/a$ & carrying capacity & $1$~\cite{ref1,ref3} \\
$\beta$ & disease transmission rate & $1$~\cite{ref1,ref3}\\
$\mu$ & death rate of the infected host population & control variable\\
$d_1$ & diffusion coefficients of $S$ & control variable \\
$d_2$ & diffusion coefficients of $I$ & control variable\\
\hline\hline
\end{tabular}
    \label{tab2}
\end{table*}

\subsection{Pattern formation by death rate}
In the numerical simulation, we found different shapes of patterns, and found that the distributions of uninfected host $S$ and infected host $I$ are always the same type. We, therefore, only consider the distribution of uninfected hosts $S$ for the sake of brevity. The main factors affecting Turing instability were death rate, fear effect, reducing reproduction ability of infected hosts and diffusion coefficient. Therefore, the idea of our numerical experiment is to consider the influence of different parameters on the root of the characteristic equation~(\ref{3.5}), and to simulate the population distribution under each parameter, that is, pattern formation. First, we plan to simulate the effects of different mortality rates on population distribution. From Fig.~\ref{fig2} and Fig.~\ref{fig3}, we set the diffusion coefficients of $S$ and $I$ are $d_1=0.01$ and $d_2=0.25$~\cite{ref1}, natural death rate $\mu$ to $\mu=0.4$, $\mu=0.51$, $\mu=0.54$, $\mu=0.55$, $\mu=0.56$ and $\mu=0.57$, other parameters are $k=0.4$, $\rho=0.1$~\cite{ref1} and shown in Table~\ref{tab2}. The corresponding positive equilibrium are $E_{2*}=(0.0434, 0.0650)$, $E_{2*}=(0.1110, 0.1066)$, $E_{2*}=(0.1364, 0.1162)$, $E_{2*}=(0.1456, 0.1191)$, $E_{2*}=(0.1551, 0.1219)$, $E_{2*}=(0.1650, 0.1245)$. Fig.~\ref{fig2} shows that maximum real part of the roots of Eq.~(\ref{3.5}) and $Det(J^j_{n*})$ against $j$ for different $\mu$ taken from the Turing region, respectively. we observe the change of the pattern shape, as presented in Fig.~\ref{fig3}. The results show that natural death rate can control the growth of pattern, following: hot spots (Fig.~\ref{fig3}(a)) $\rightarrow$ hot spots-stripes (Fig.~\ref{fig3}(b)) $\rightarrow$ hot stripes (Fig.~\ref{fig3}(c)) $\rightarrow$ cold stripes (Fig.~\ref{fig3}(d)) $\rightarrow$ cold spots-stripes (Fig.~\ref{fig3}(e)) $\rightarrow$ cold spots (Fig.~\ref{fig3}(f)).
\begin{figure}
\centering
\subfigure[]{
\includegraphics[width=7cm]{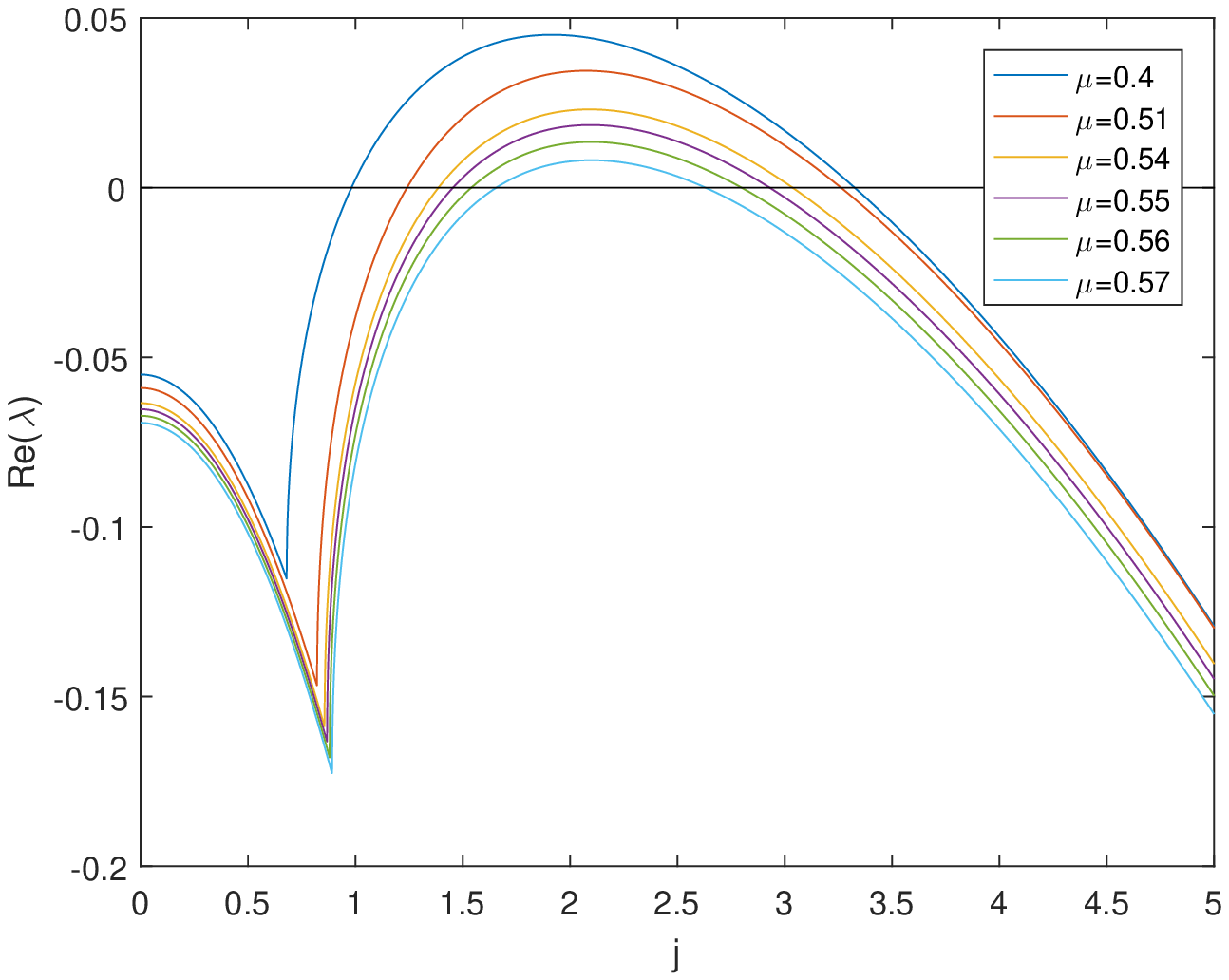}
}
\quad
\subfigure[]{
\includegraphics[width=7cm]{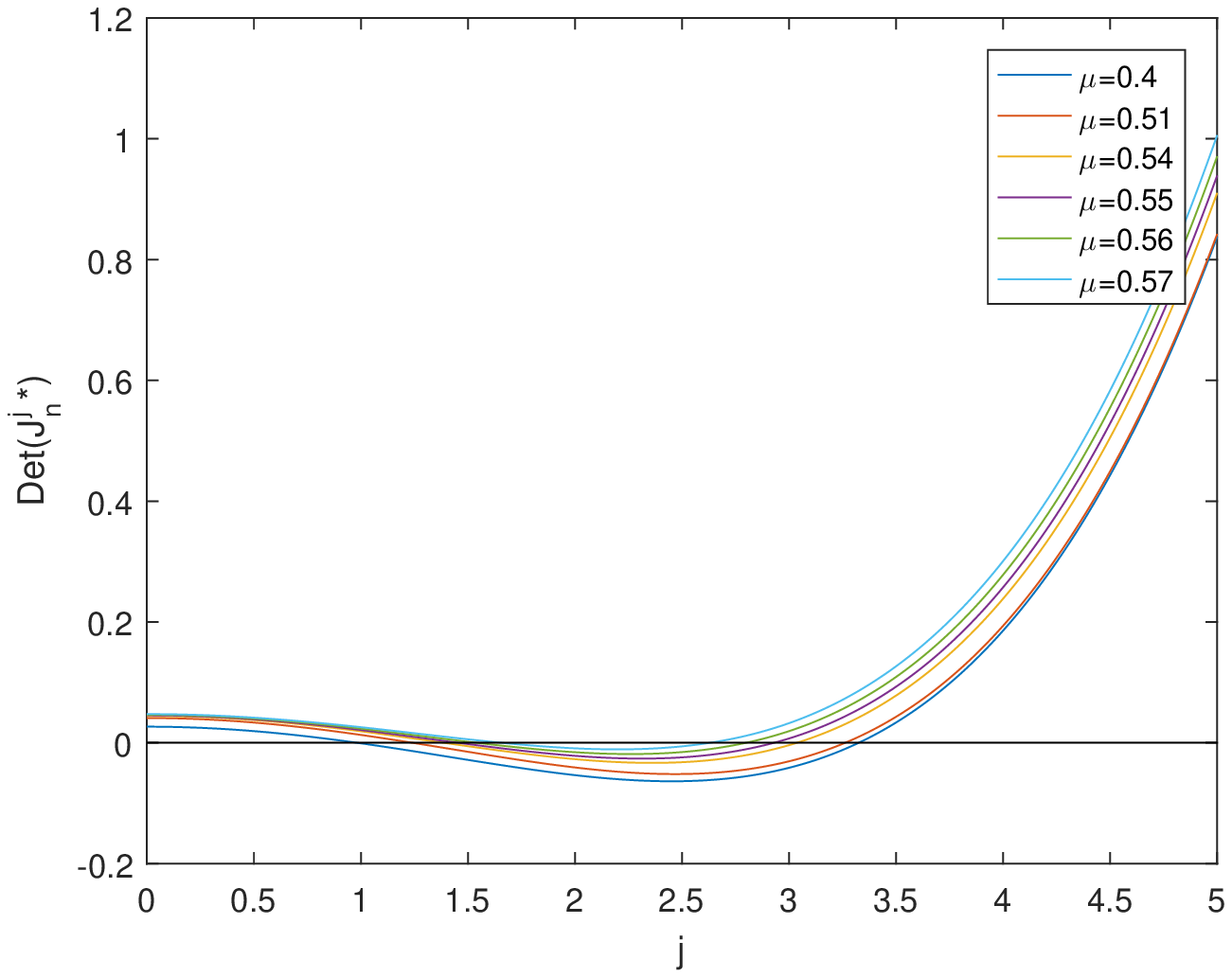}
}

\caption{Plots of (a) the maximum real part of the roots of Eq.~(\ref{3.5}) and (b) $Det(J^j_{n*})$ against $j$ for different $\mu$ taken from the Turing region. Other parameters are set to $d_{1}=0.01$, $d_{2}=0.25$, $\rho=0.1$ and $k=0.4$.}
\label{fig2}
\end{figure}

\begin{figure}
\centering
\subfigure[$\mu=0.4$, $t=2000$]{
\includegraphics[width=4cm]{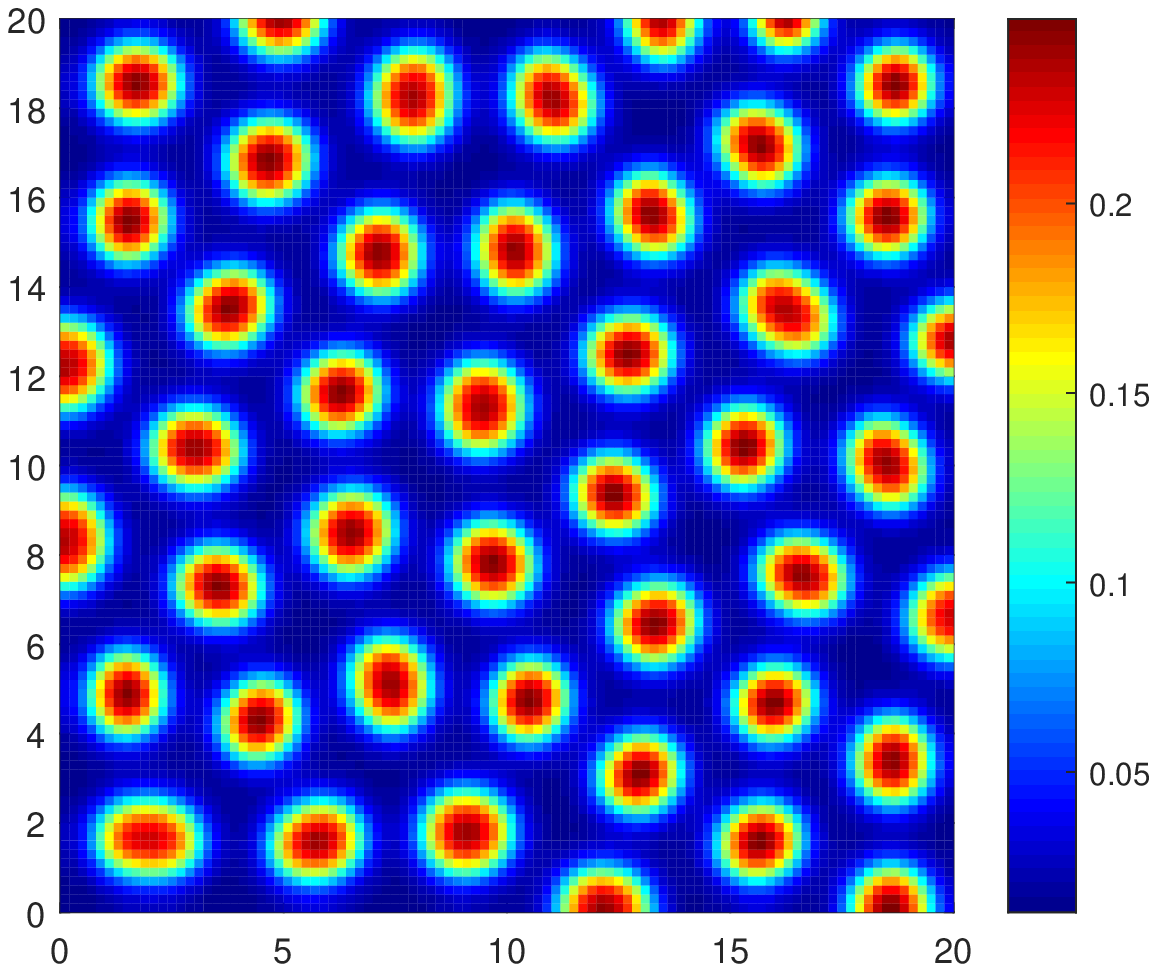}
}
\quad
\subfigure[$\mu=0.51$, $t=2000$]{
\includegraphics[width=4cm]{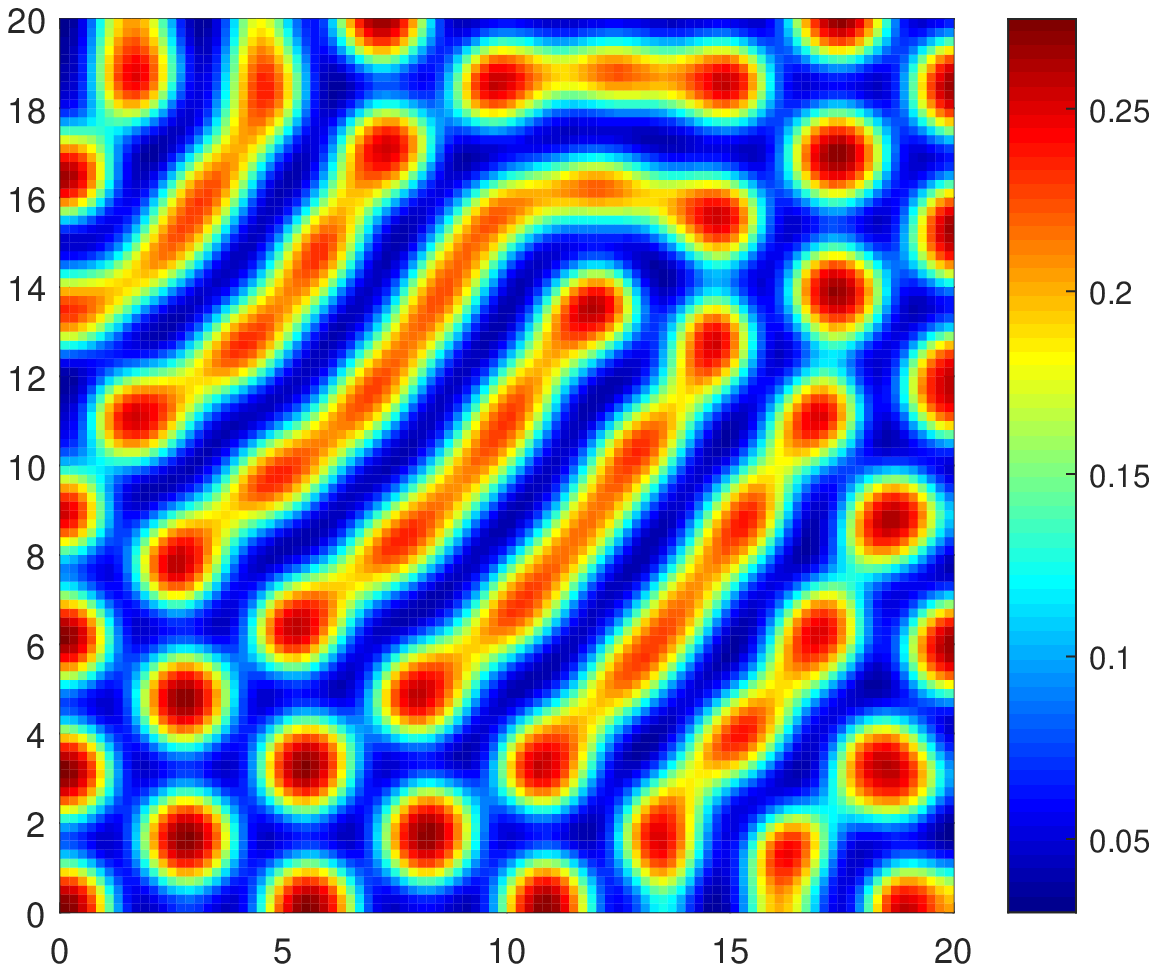}
}
\quad
\subfigure[$\mu=0.54$, $t=2000$]{
\includegraphics[width=4cm]{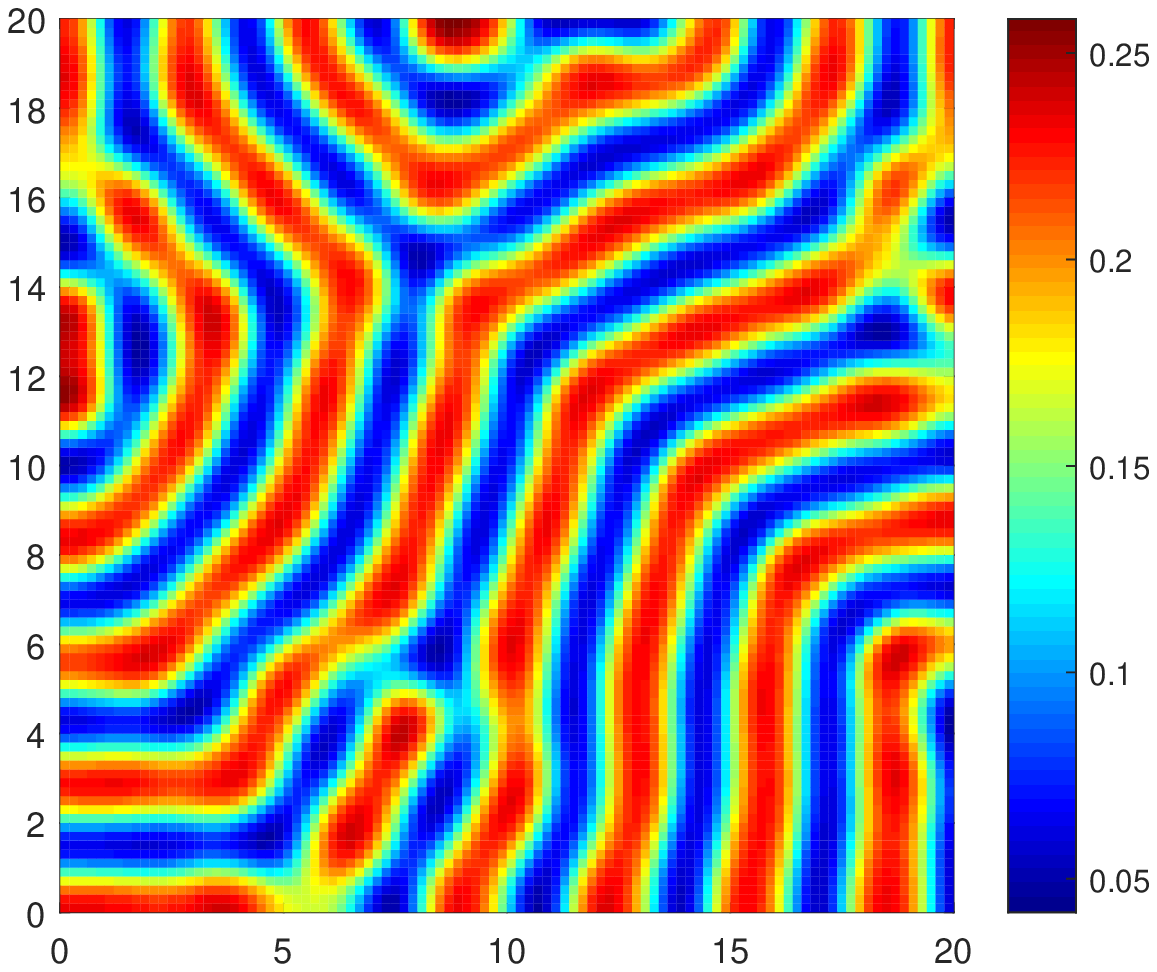}
}
\quad

\subfigure[$\mu=0.55$, $t=2000$]{
\includegraphics[width=4cm]{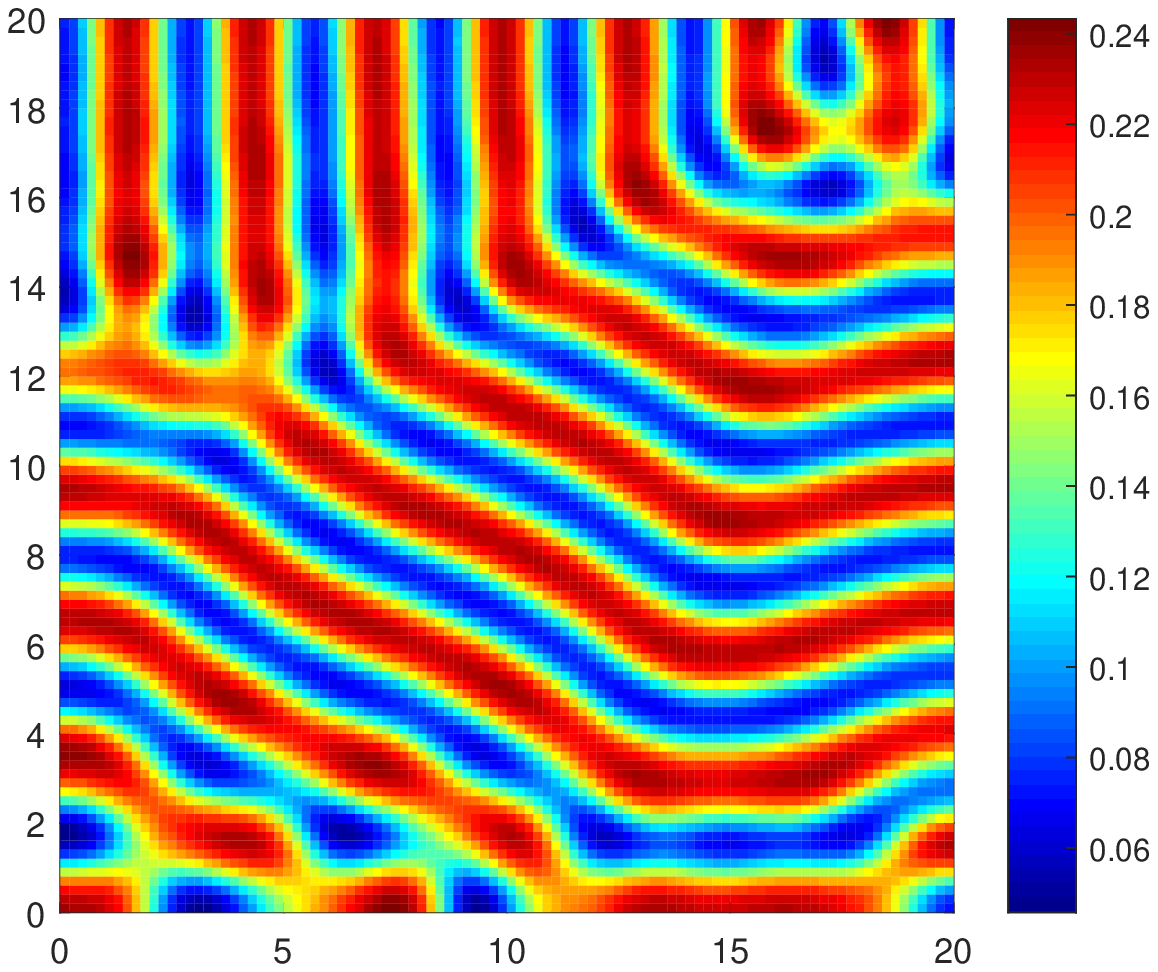}
}
\quad
\subfigure[$\mu=0.56$, $t=2000$]{
\includegraphics[width=4cm]{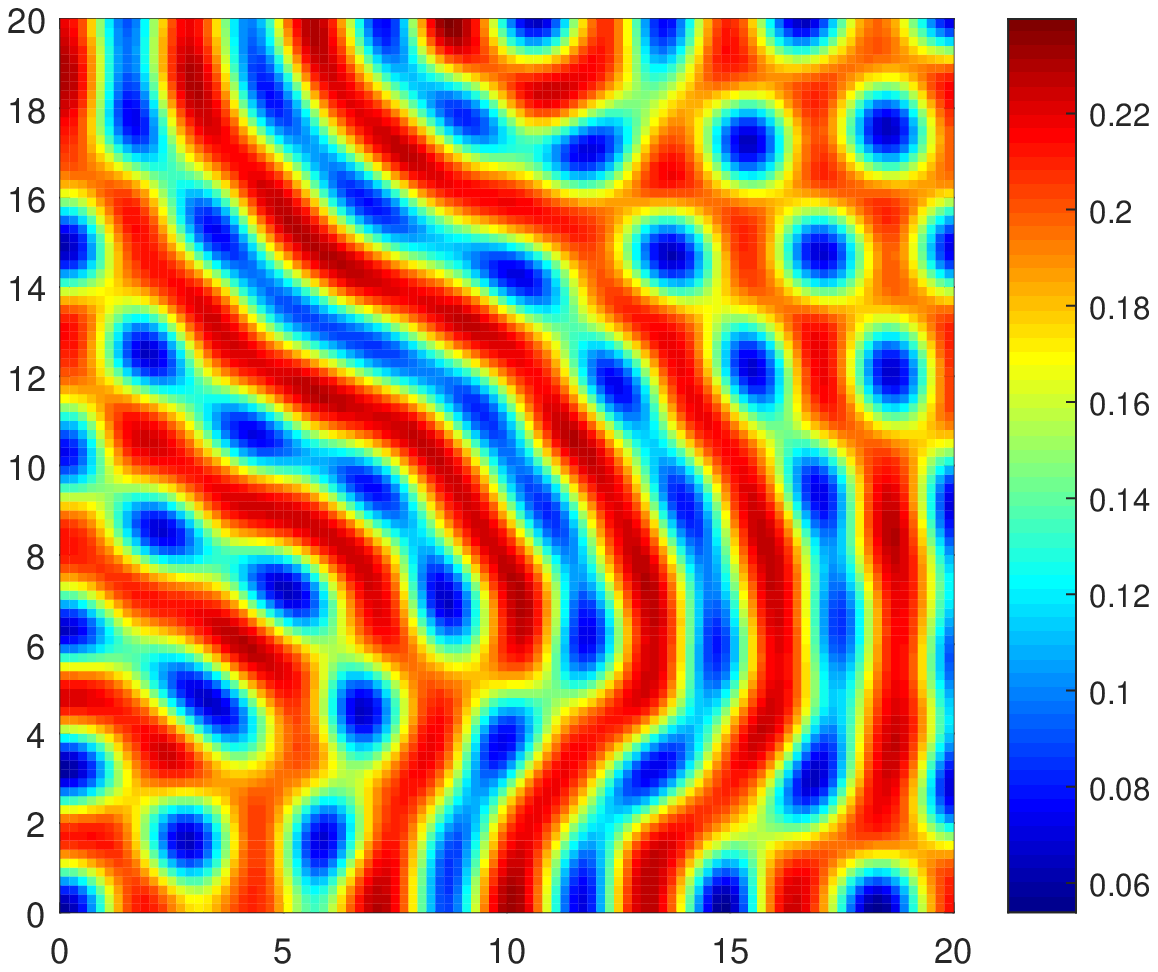}
}
\quad
\subfigure[$\mu=0.57$, $t=2000$]{
\includegraphics[width=4cm]{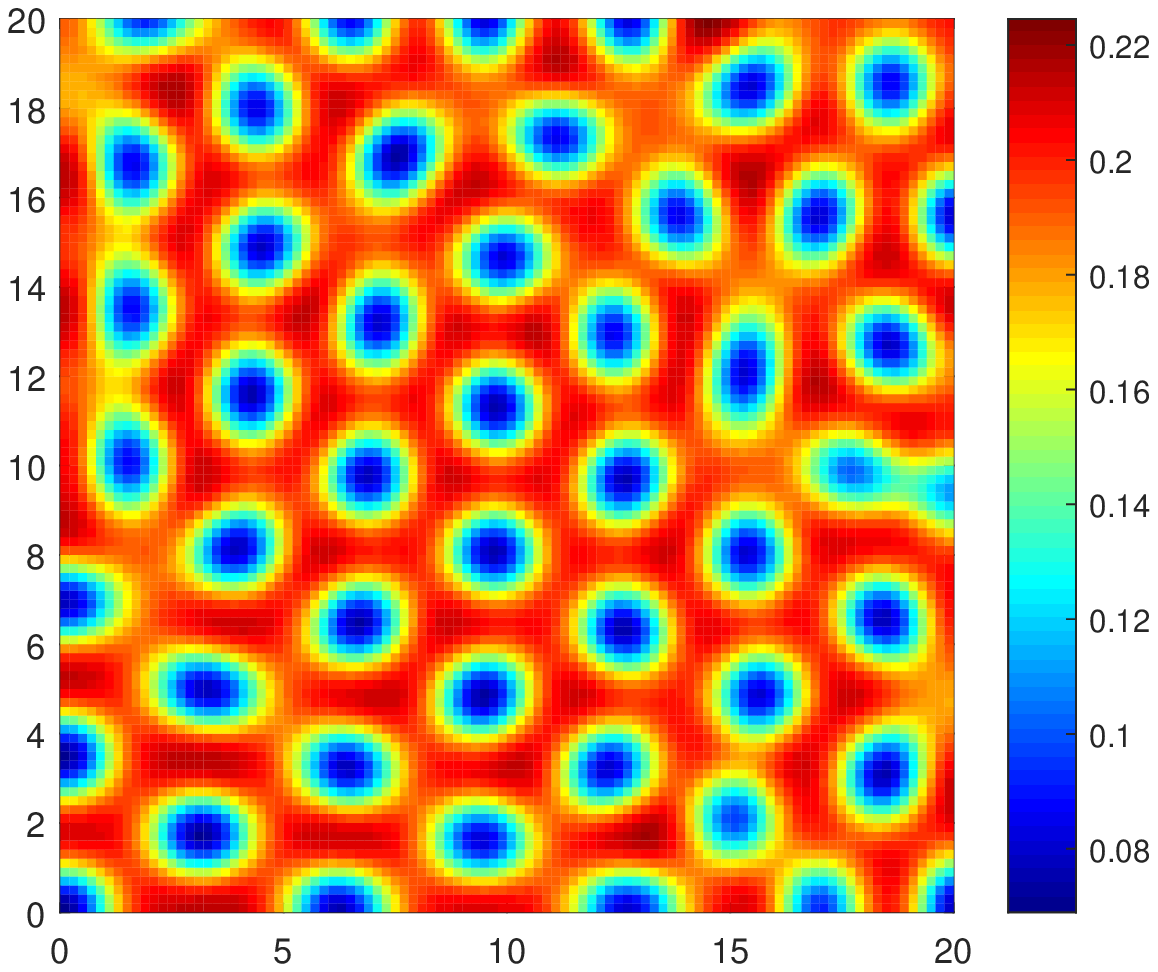}
}
\caption{Natural death rate of infected hosts: (a) $\mu=0.4$, (b) $\mu=0.51$, (c) $\mu=0.54$, (d) $\mu=0.55$, (e) $\mu=0.56$, (f) $\mu=0.57$. Natural death rate control the growth of pattern, following: hot spots $\rightarrow$ hot spots-stripes $\rightarrow$ hot stripes $\rightarrow$ cold stripes $\rightarrow$ cold spots-stripes $\rightarrow$ cold spots for $d_{1}=0.01$, $d_{2}=0.25$, $\rho=0.1$ and $k=0.4$. }
\label{fig3}
\end{figure}

\subsection{Pattern formation by fear effect}
In this subsection, we plan to observe the distribution of the population under different degrees of fear. From Fig.~\ref{fig4} and Fig.~\ref{fig5}, we set the diffusion coefficients of $S$ and $I$ are $d_1=0.01$ and $d_2=0.25$~\cite{ref1}, fear effect $k$ to $k=0.01$, $k=0.25$, $k=0.42$, $k=0.5$, $k=1.8$ and $k=3$, other parameters are $\mu=0.55$, $\rho=0.1$~\cite{ref1} and shown in Table~\ref{tab2}. The corresponding positive equilibrium are $E_{2*}=(0.1680, 0.1375)$, $E_{2*}=(0.1532, 0.1254)$, $E_{2*}=(0.1446, 0.1183)$, $E_{2*}=(0.1410, 0.1153)$, $E_{2*}=(0.1022, 0.0836)$, $E_{2*}=(0.0828, 0.0678)$. Fig.~\ref{fig4} shows that maximum real part of the roots of Eq.~(\ref{3.5}) and $Det(J^j_{n*})$ against $j$ for different $k$ taken from the Turing region. Respectively, we observed the change of the pattern shape. The results showed that fear effect can control the growth of pattern: cold spots (Fig.~\ref{fig5}(a)) $\rightarrow$ cold spots-stripes (Fig.~\ref{fig5}(b)) $\rightarrow$ cold stripes (Fig.~\ref{fig5}(c)) $\rightarrow$ hot stripes (Fig.~\ref{fig5}(d)) $\rightarrow$ hot spots-stripes (Fig.~\ref{fig5}(e)) $\rightarrow$ hot spots (Fig.~\ref{fig5}(f)).

\begin{figure}
\centering
\subfigure[]{
\includegraphics[width=7cm]{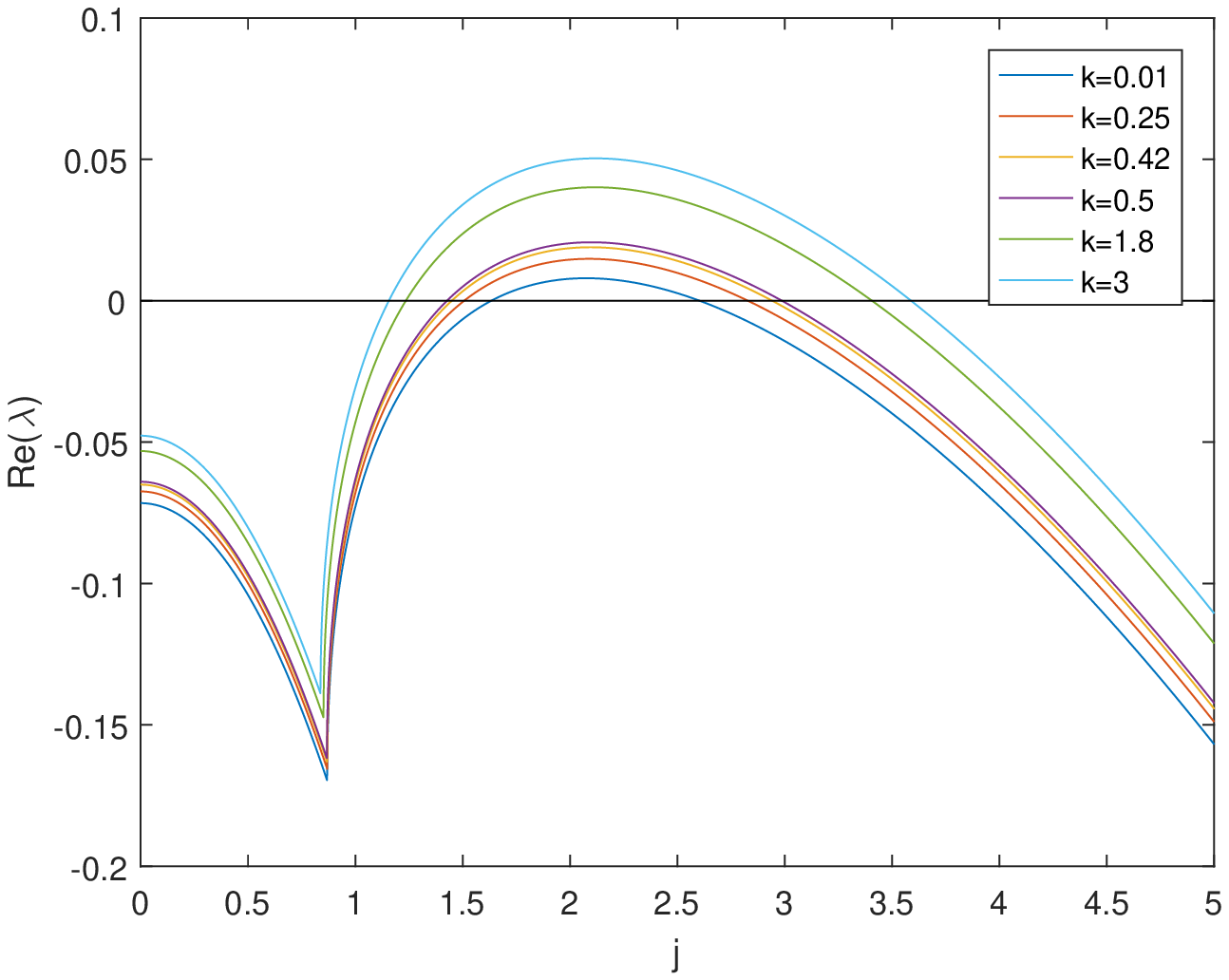}
}
\quad
\subfigure[]{
\includegraphics[width=7cm]{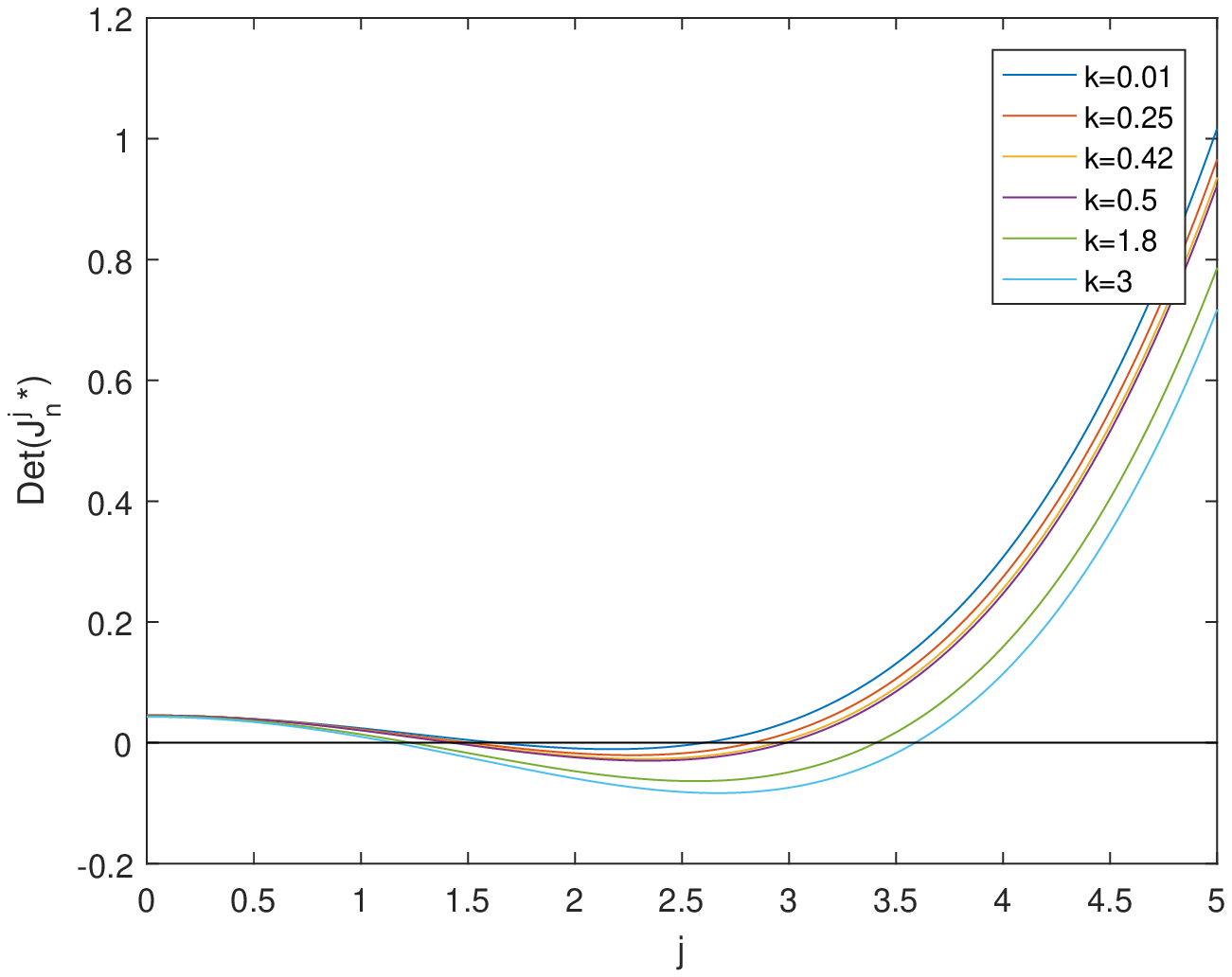}
}

\caption{Plots of (a) the maximum real part of the roots of Eq.~(\ref{3.5}) and (b) $Det(J^j_{n*})$ against $j$ for different $k$ taken from the Turing region. Other parameters are set to $d_{1}=0.01$, $d_{2}=0.25$, $\rho=0.1$ and $\mu=0.55$~\cite{ref1}.}
\label{fig4}
\end{figure}

\begin{figure}
\centering
\subfigure[$k=0.01$, $t=1000$]{
\includegraphics[width=4cm]{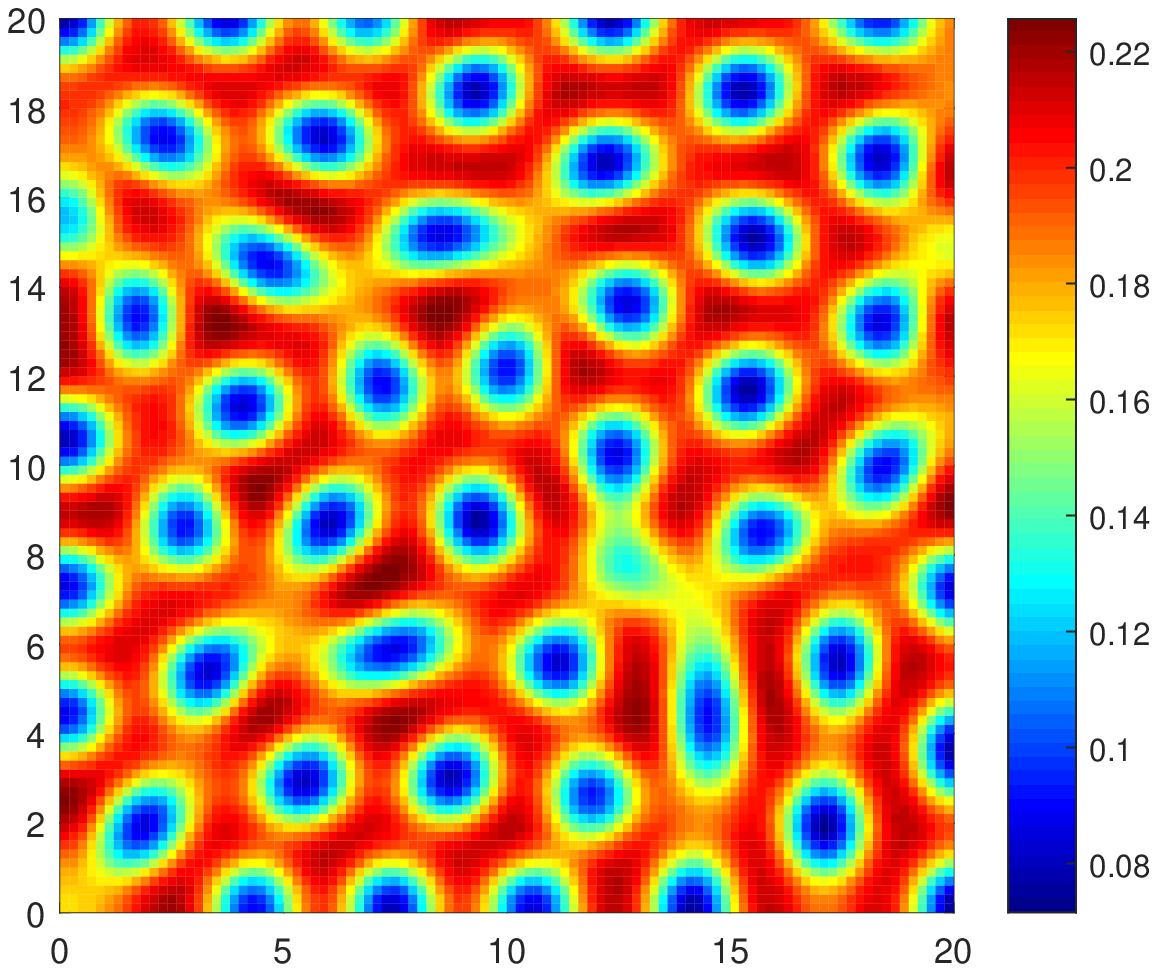}
}
\quad
\subfigure[$k=0.25$, $t=1000$]{
\includegraphics[width=4cm]{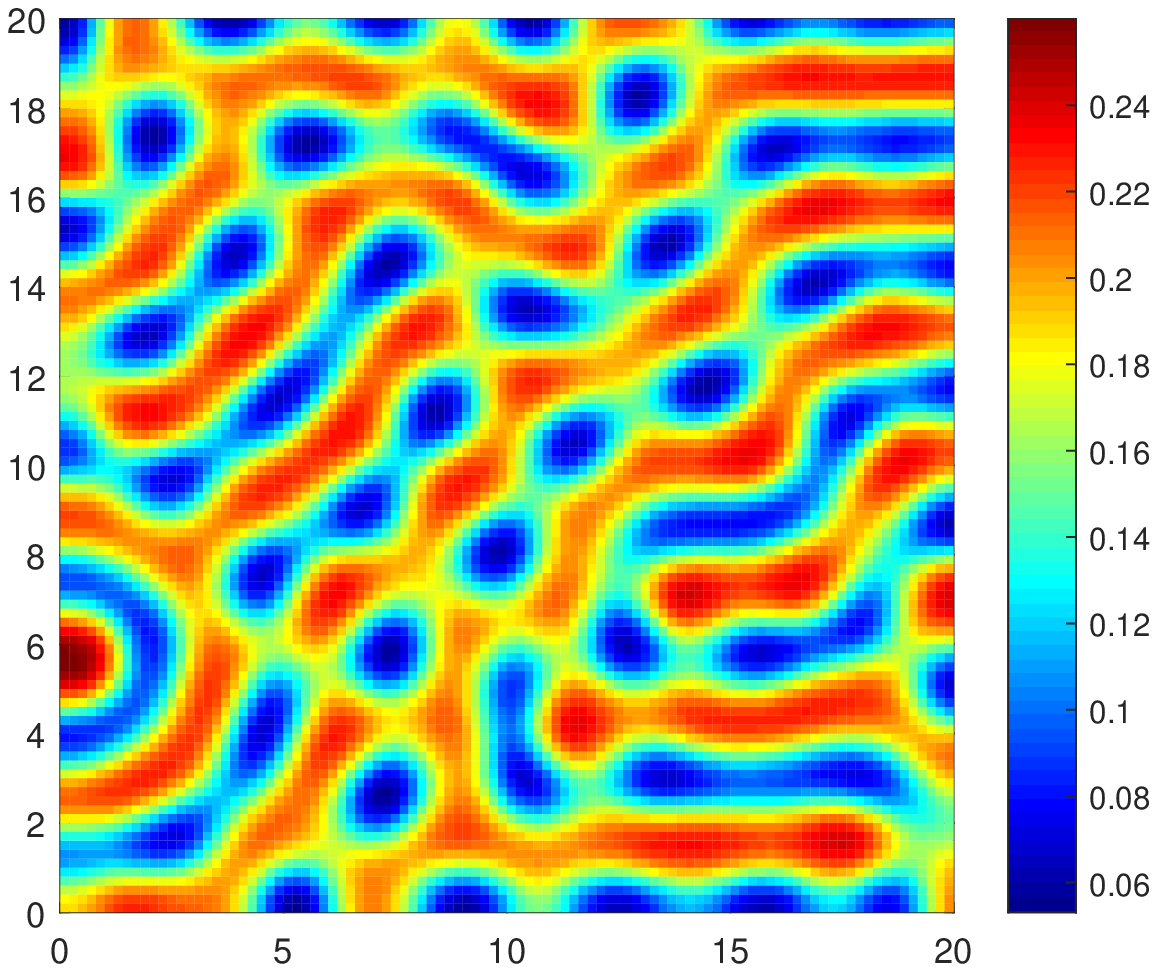}
}
\quad
\subfigure[$k=0.42$, $t=1000$]{
\includegraphics[width=4cm]{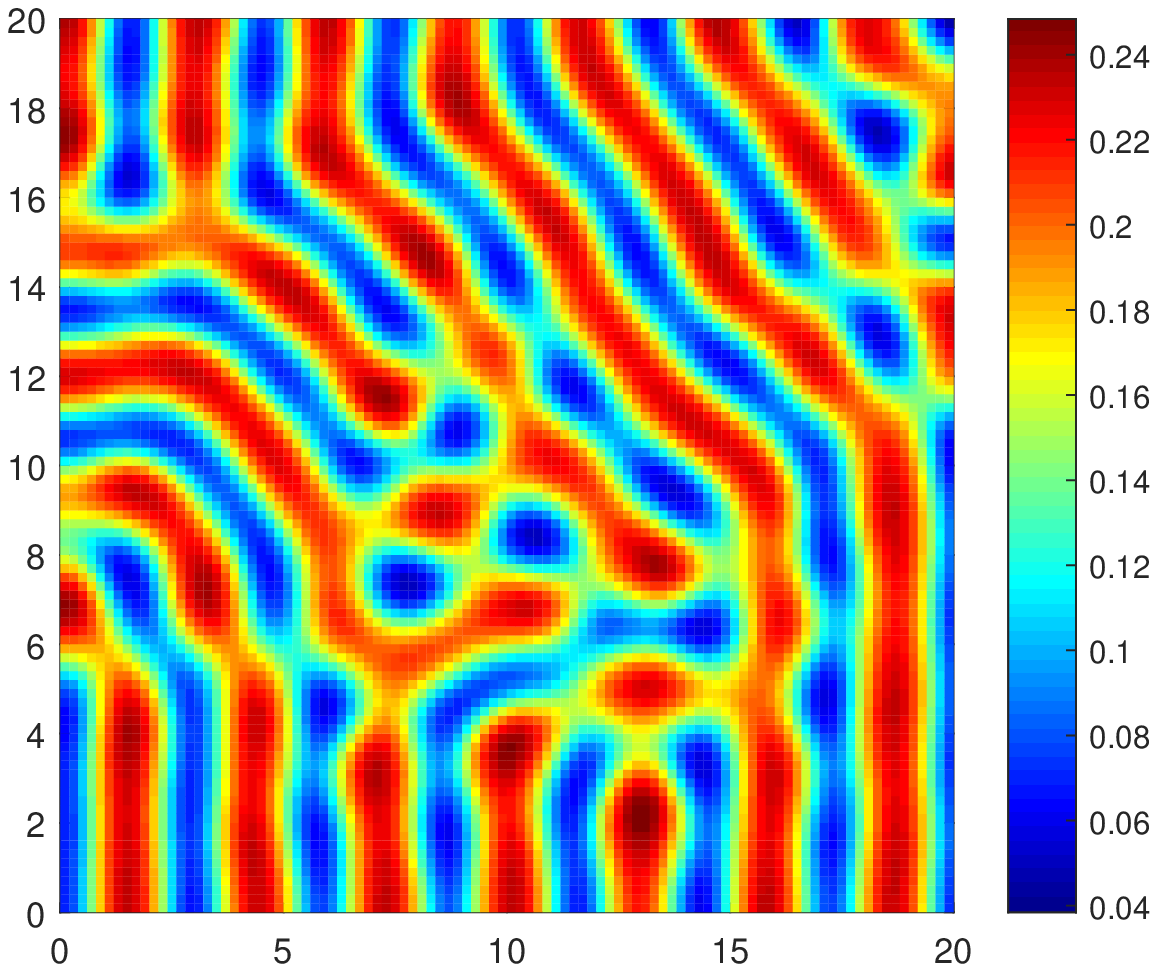}
}
\quad

\subfigure[$k=0.5$, $t=1000$]{
\includegraphics[width=4cm]{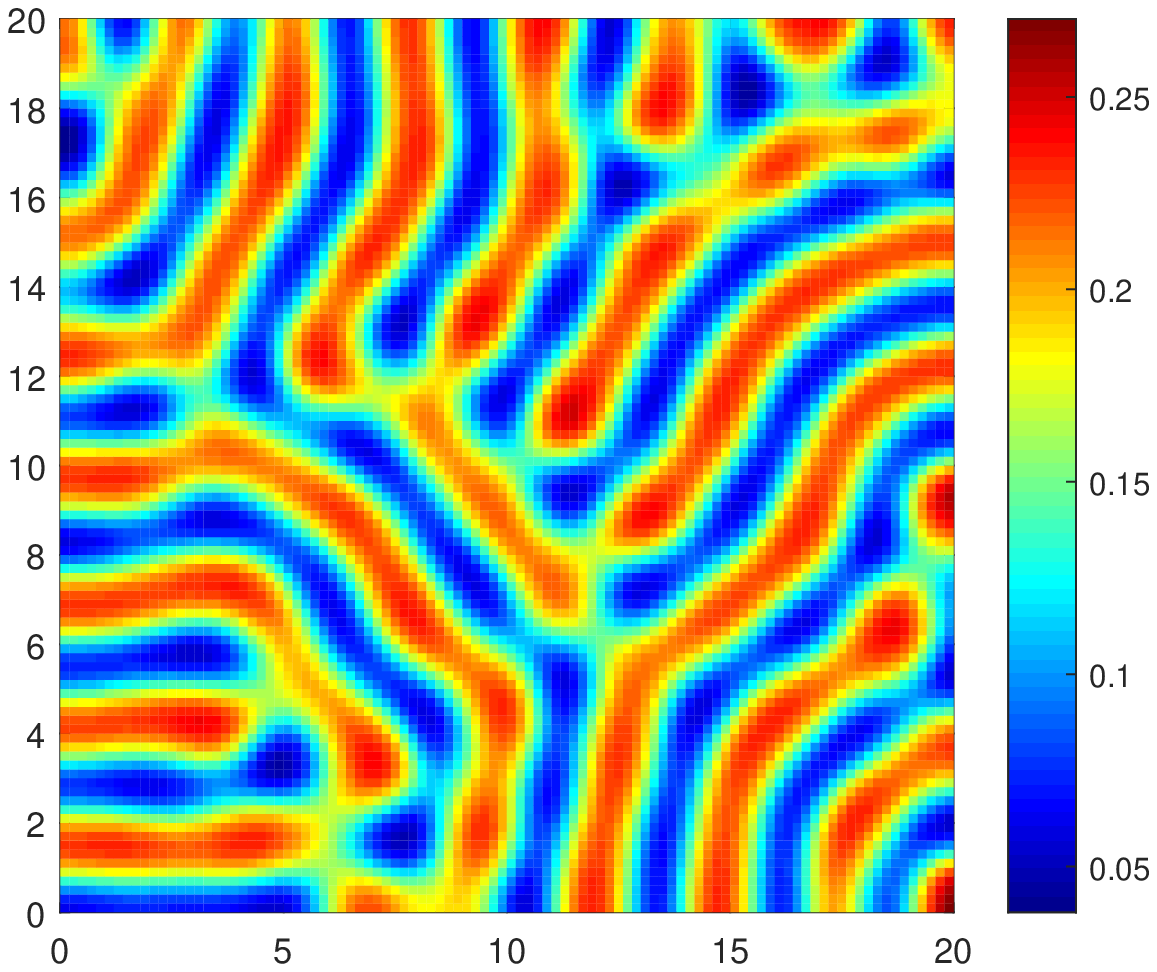}
}
\quad
\subfigure[$k=1.8$, $t=1000$]{
\includegraphics[width=4cm]{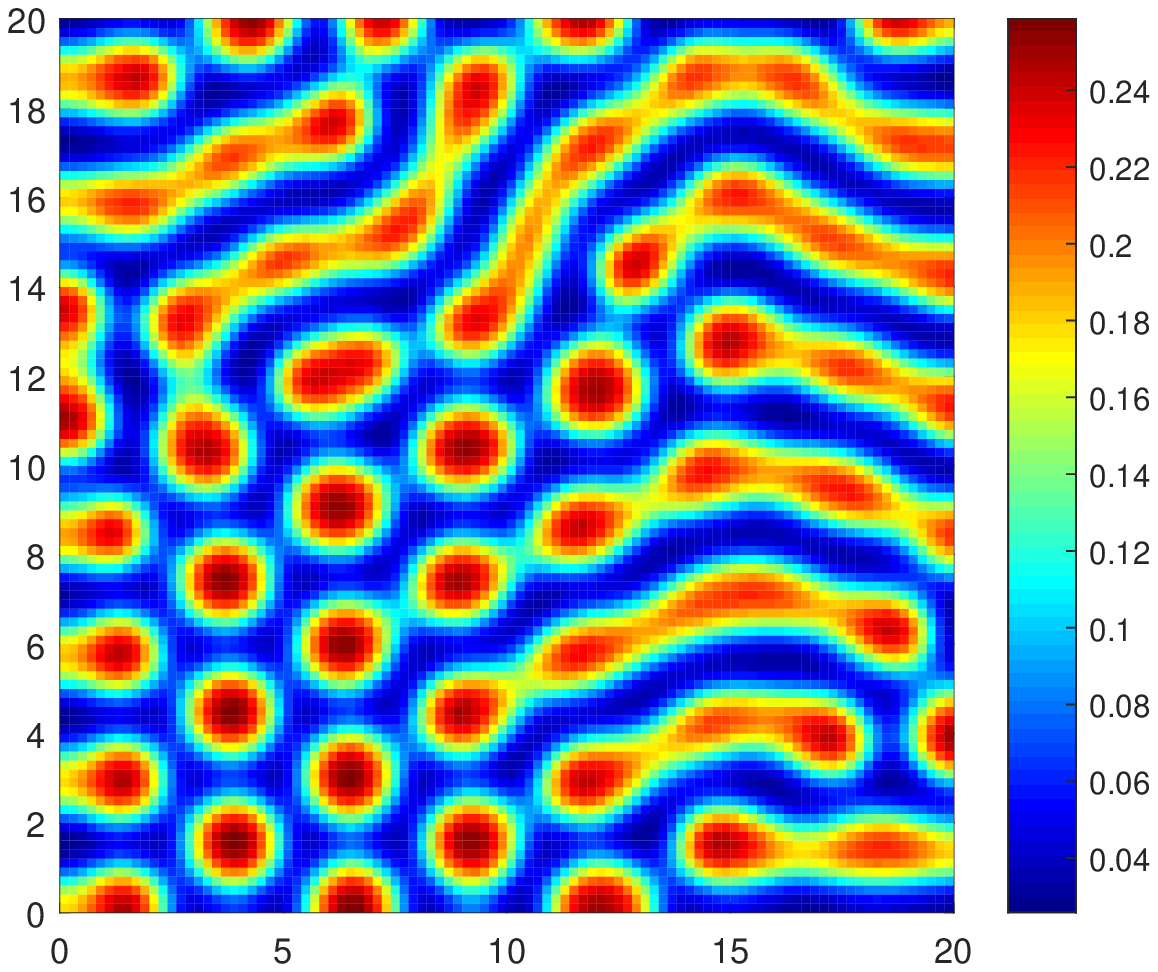}
}
\quad
\subfigure[$k=3$, $t=1000$]{
\includegraphics[width=4cm]{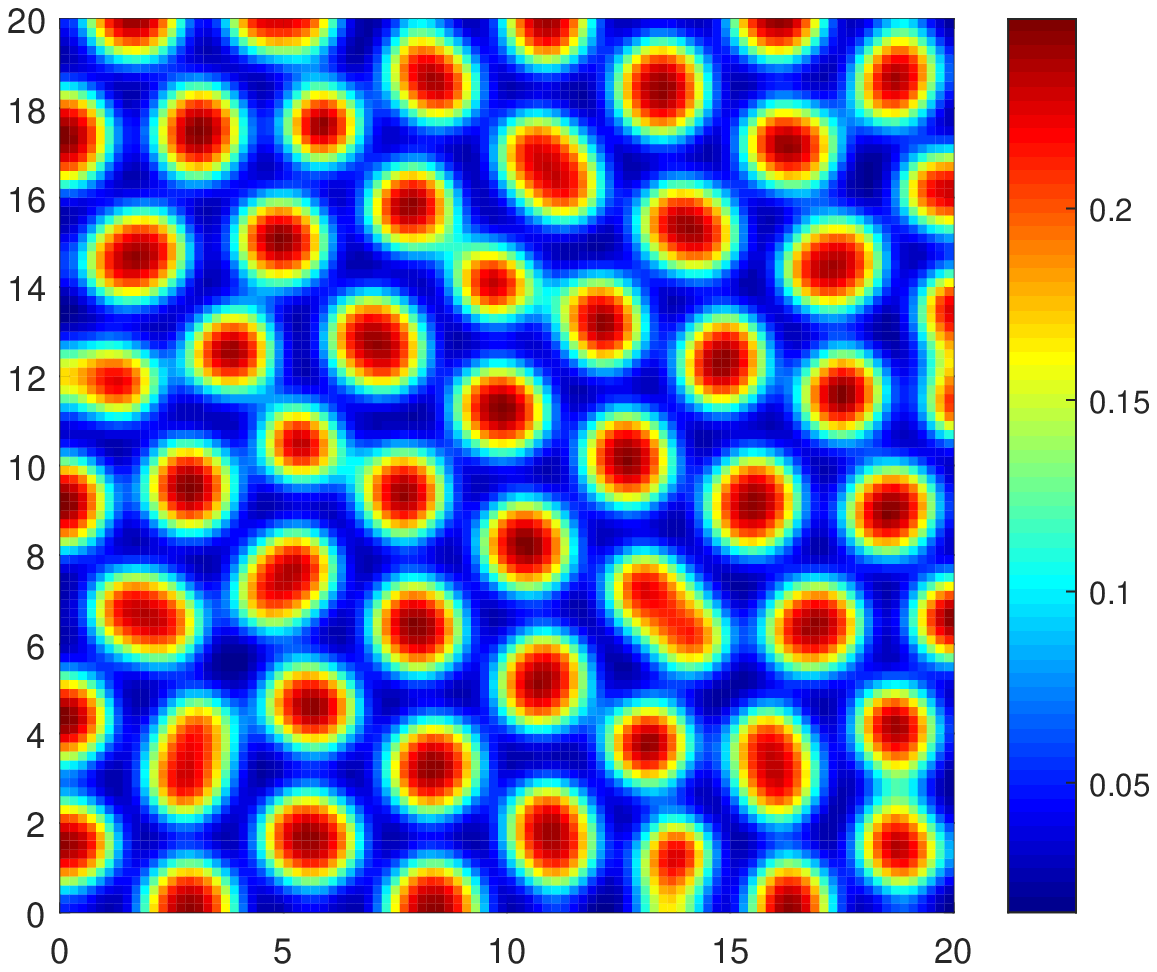}
}
\caption{Fear effect: (a) $k=0.01$, (b) $k=0.25$, (c) $k=0.42$, (d) $k=0.5$, (e) $k=1.8$,  (f) $k=3$. Fear effect controls the growth of pattern: Cold spots $\rightarrow$ cold spots-stripes $\rightarrow$ cold stripes $\rightarrow$ hot stripes $\rightarrow$ hot spots-stripes $\rightarrow$ hot spots for $d_{1}=0.01$, $d_{2}=0.25$, $\rho=0.1$ and $\mu=0.55$. }
\label{fig5}
\end{figure}

\subsection{Pattern formation by reducing reproduction ability of infected hosts }
Then, we plan to observe the effect of reducing reproduction ability of infected hosts on population distribution. From Fig.~\ref{fig6} and Fig.~\ref{fig7}, we set the fear effect is $k=0.01$, diffusion coefficients of $S$ is $d_1=0.01$~\cite{ref1}, diffusion coefficients of $I$ is $d_2=0.25$~\cite{ref1} and reducing reproduction ability of infected hosts $\rho=0.085$ and $\rho=0.06$, other parameters are $\mu=0.55$~\cite{ref1} and shown in Table~\ref{tab2}. Fig.~\ref{fig6} shows that maximum real part of the roots of Eq.~(\ref{3.5}) and $Det(J^j_{n*})$ against $j$ for different $\rho$ taken from the Turing region. Respectively, we observed the change of the pattern shape. The results showed that reducing reproduction ability of infected hosts can control the growth of pattern: cold spots-stripes patten (Fig.~\ref{fig7}(a))$\rightarrow$ cold stripes patten (Fig.~\ref{fig7}(b)).

\begin{figure}
\centering
\subfigure[]{
\includegraphics[width=7cm]{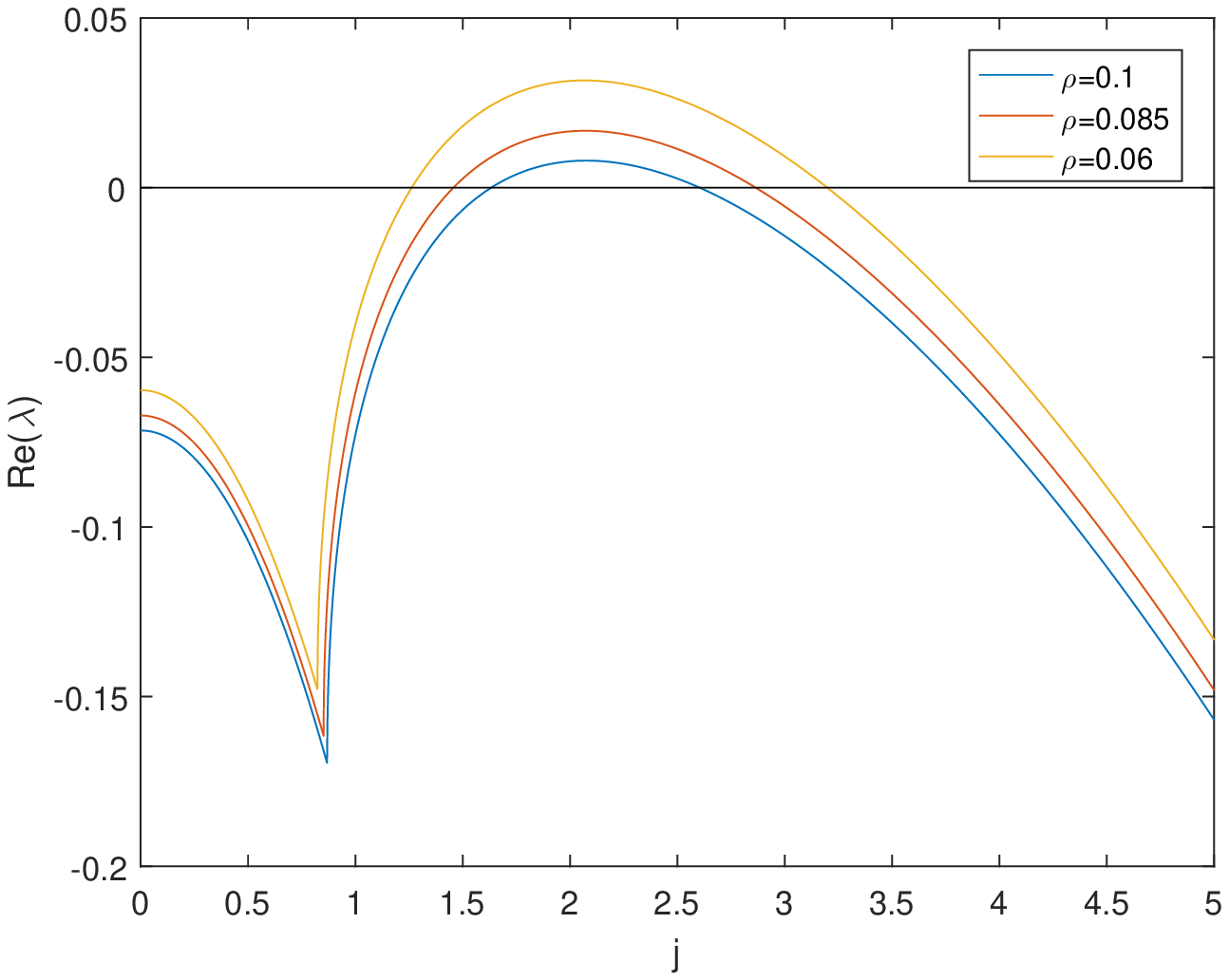}
}
\quad
\subfigure[]{
\includegraphics[width=7cm]{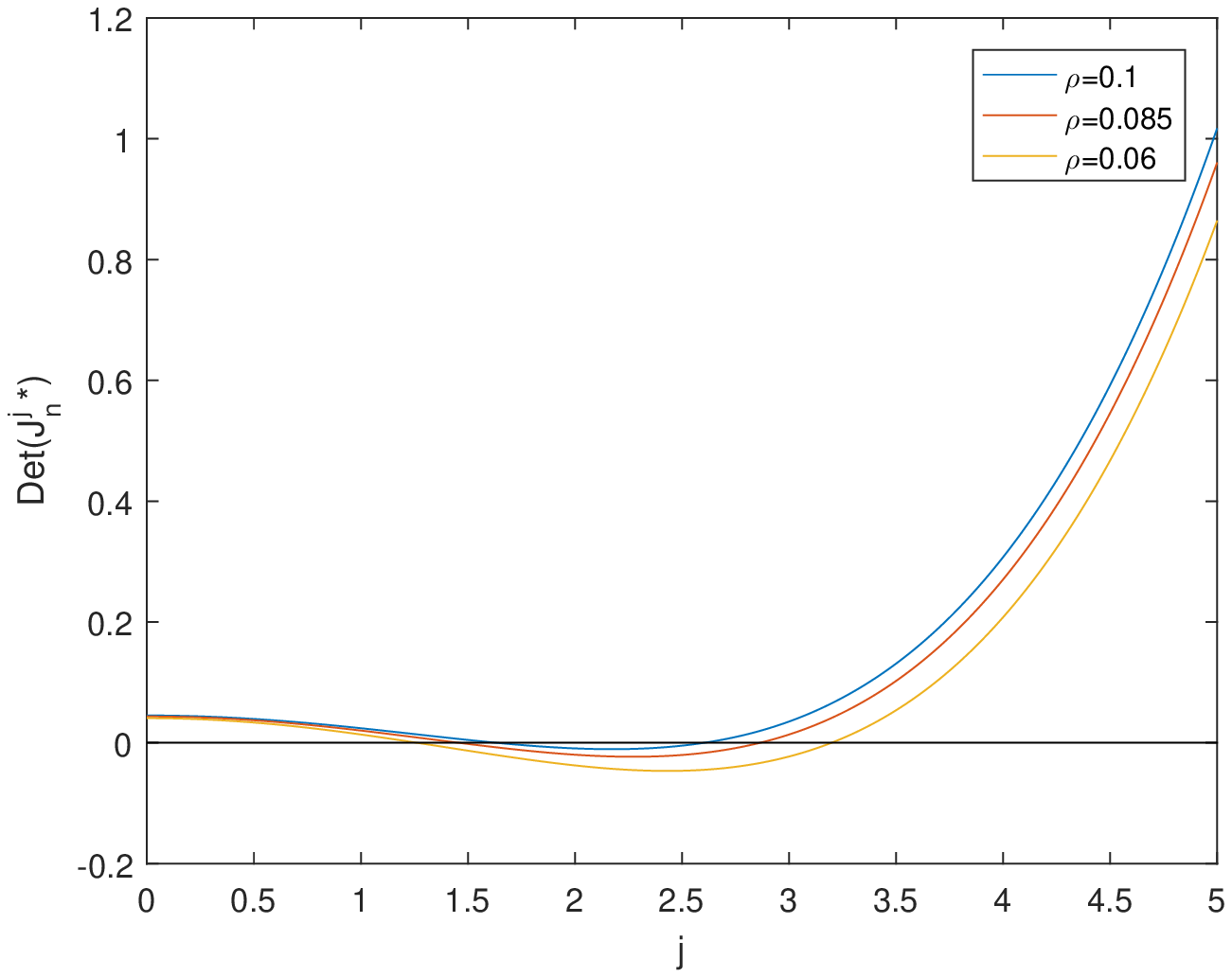}
}

\caption{Plots of (a) the maximum real part of the roots of Eq.~(\ref{3.5}) and (b) $Det(J^j_{n*})$ against $j$ for different $\rho$ taken from the Turing region. Other parameters are set to $d_{1}=0.01$, $d_{2}=0.25$, $\mu=0.55$ and $k=0.01$.}
\label{fig6}
\end{figure}

\begin{figure}
\centering
\subfigure[$\rho=0.085$, $t=1000$]{
\includegraphics[width=7cm]{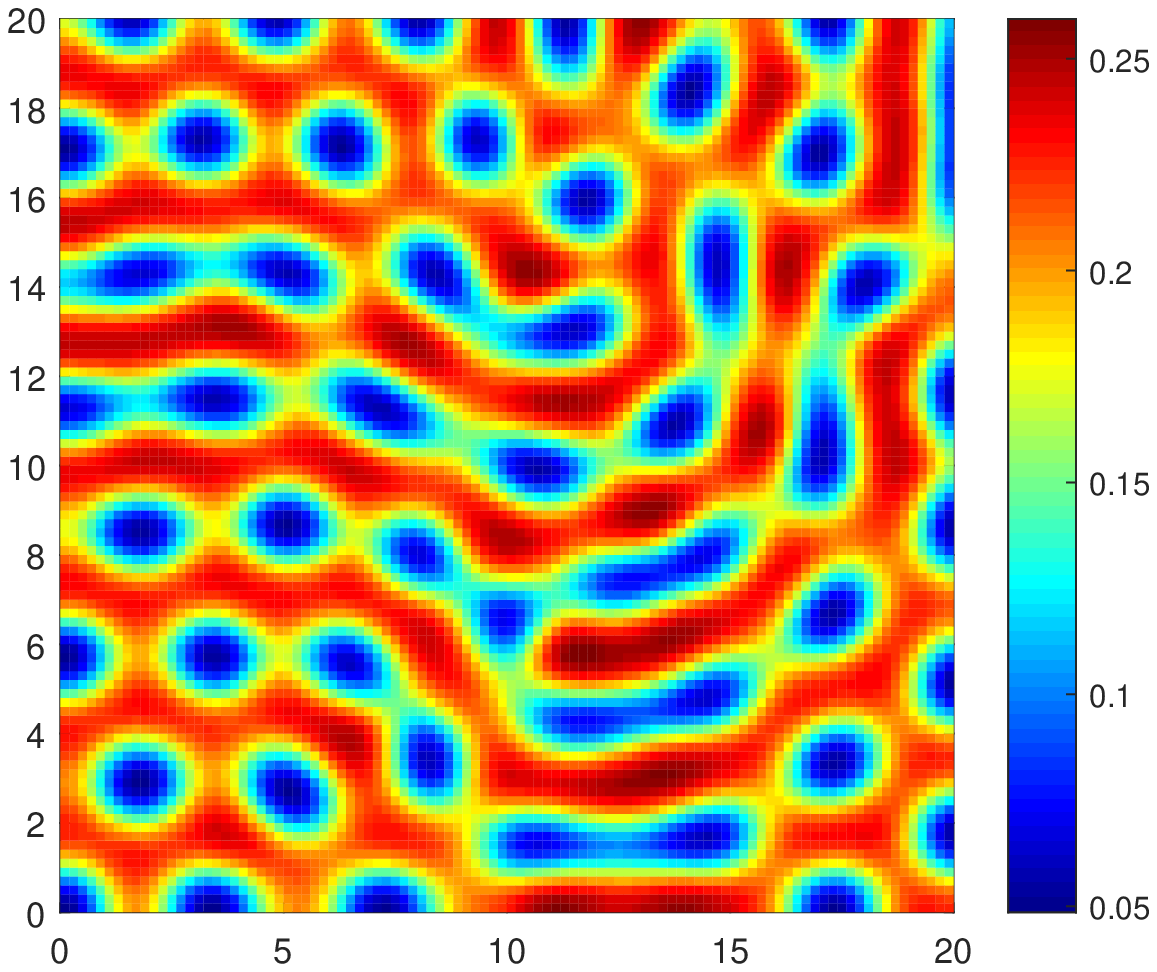}
}
\quad
\subfigure[$\rho=0.06$, $t=1000$]{
\includegraphics[width=7cm]{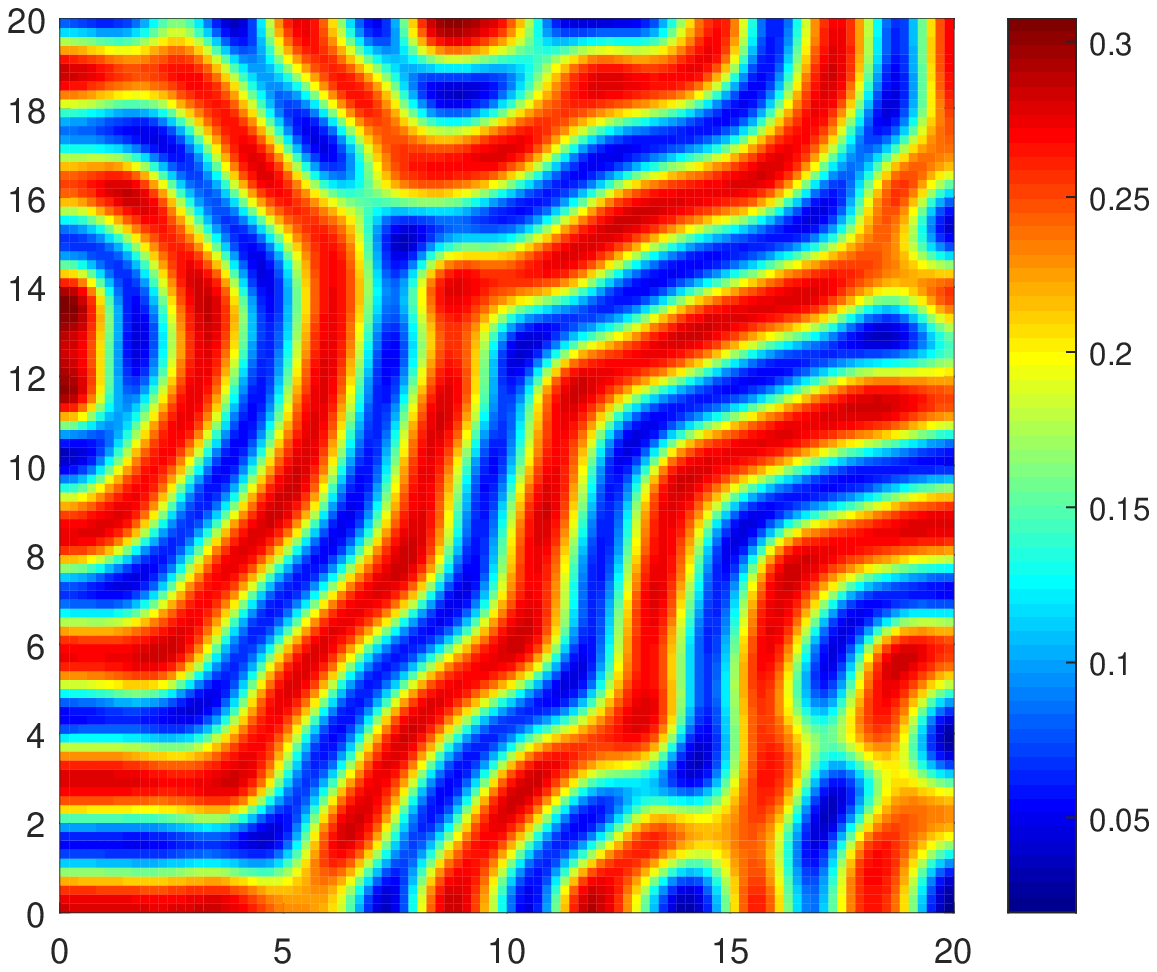}
}
\caption{Reducing reproduction ability of infected hosts: (a) $\rho=0.085$, (b) $\rho=0.06$. Reducing reproduction ability of infected hosts controls the growth of pattern: cold spots-stripes, stripes patten formation for $d_{1}=0.01$, $d_{2}=0.25$, $\mu=0.55$ and $k=0.01$.}
\label{fig7}
\end{figure}

\subsection{Pattern formation by diffusion}
Finally, we give the influence of the self diffusion coefficients of the susceptible and infected on the population distribution. From Fig.~\ref{fig8} and Fig.~\ref{fig9}, we set the fear effect is $k=0.01$, reducing reproduction ability of infected hosts $\rho=0.1$~\cite{ref1}, diffusion coefficients of $S$ is $d_1=0.01$~\cite{ref1} and diffusion coefficients of $I$ to $d_2=0.32$ and $d_2=3$, other parameters are $\mu=0.55$~\cite{ref1} and shown in Table~\ref{tab2}. Fig.~\ref{fig8} shows that maximum real part of the roots of Eq.~(\ref{3.5}) and $Det(J^j_{n*})$ against $j$ for different $d_{2}$ taken from the Turing region. Respectively, we observed the change of the pattern shape. The results showed that self diffusion coefficient of the susceptible can control the growth of pattern: cold spots-stripes patten (Fig.~\ref{fig9}(a))$\rightarrow$ cold stripes patten. (Fig.~\ref{fig9}(b)).
\begin{figure}
\centering
\subfigure[]{
\includegraphics[width=7cm]{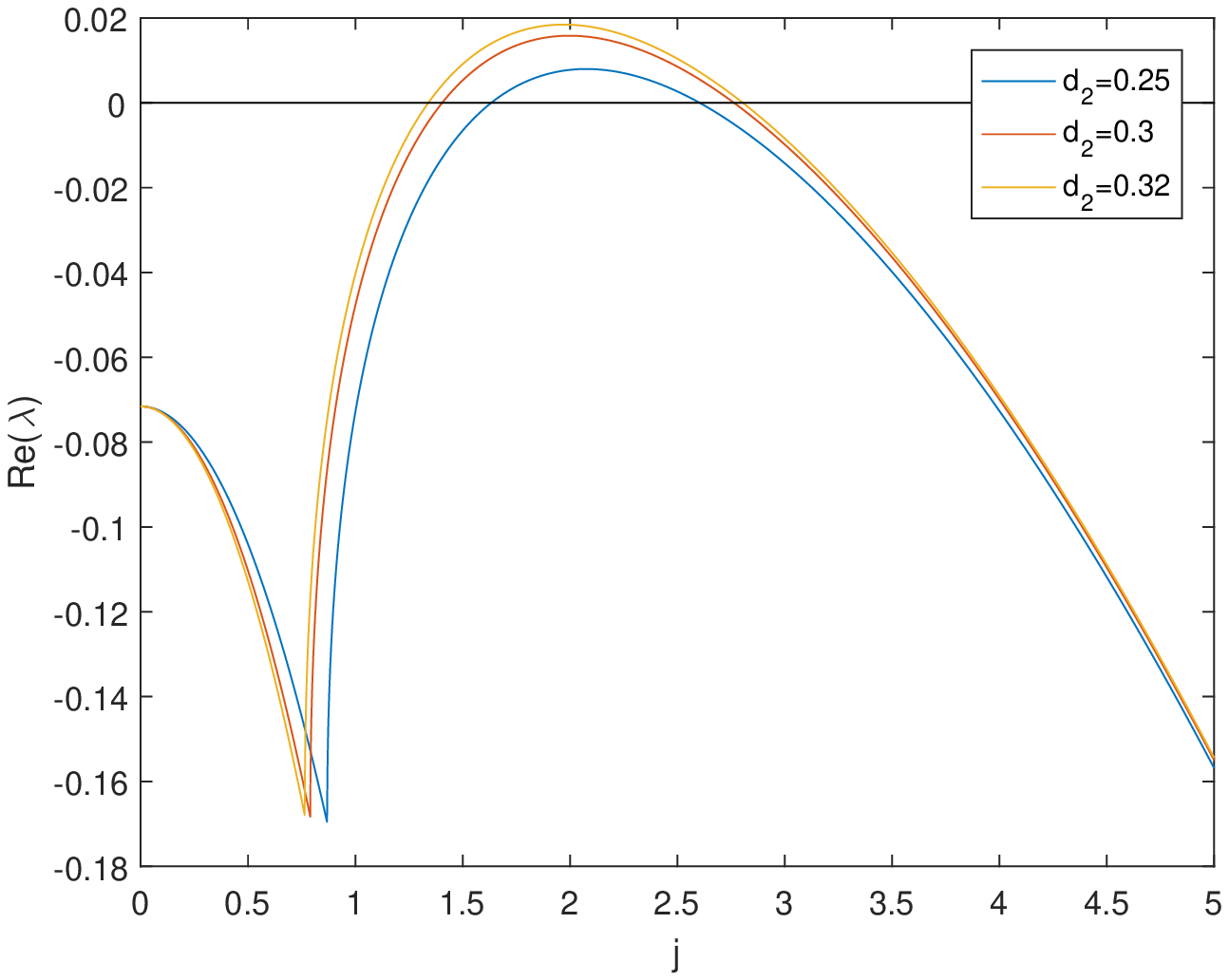}
}
\quad
\subfigure[]{
\includegraphics[width=7cm]{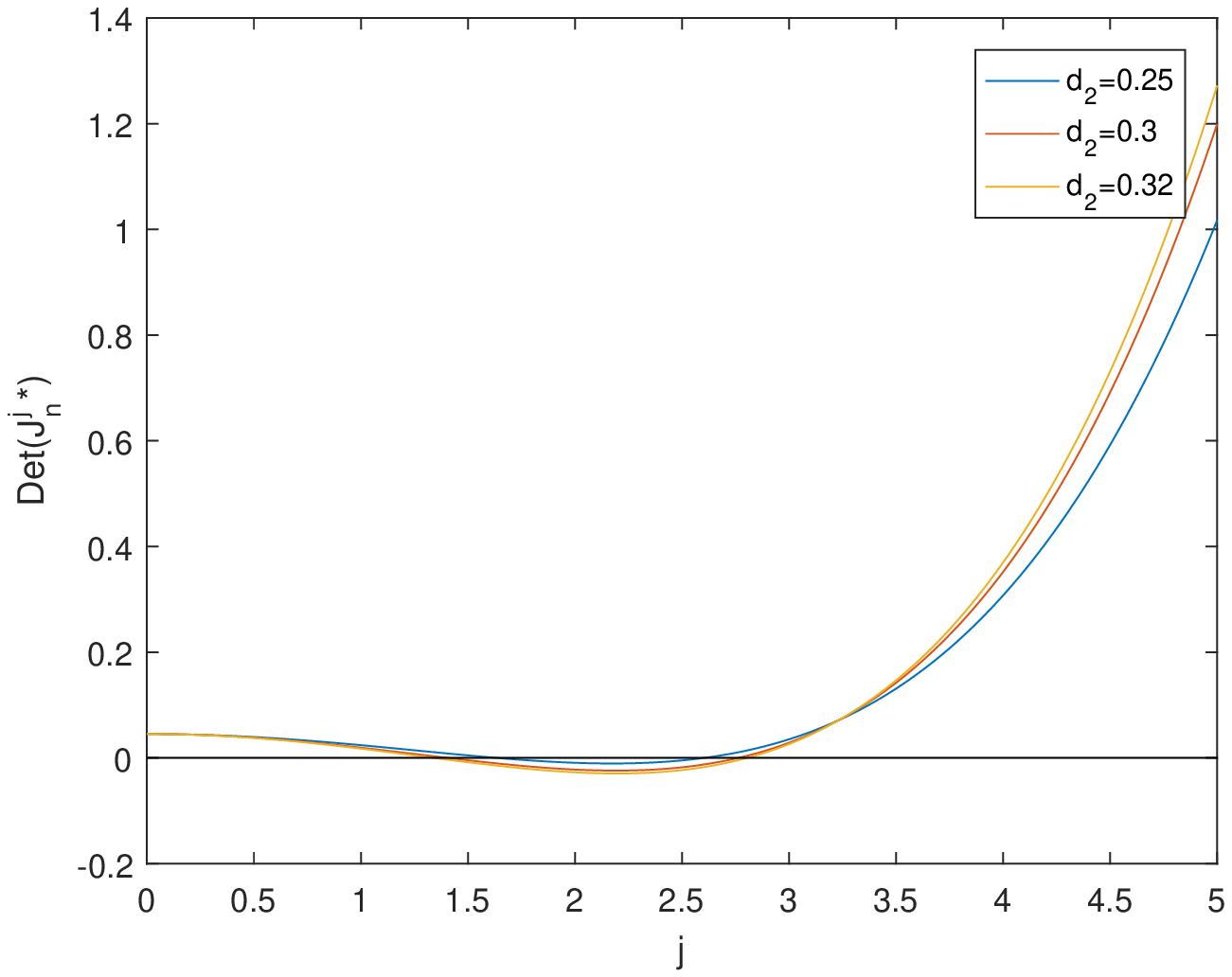}
}

\caption{Plots of (a) the maximum real part of the roots of Eq.~(\ref{3.5}) and (b) $Det(J^j_{n*})$ against $j$ for different $d_{2}$ taken from the Turing region. Other parameters are set to $d_{1}=0.01$, $k=0.01$, $\rho=0.1$ and $\mu=0.55$.}
\label{fig8}
\end{figure}

\begin{figure}
\centering
\subfigure[$d_2=0.32$, $t=1000$]{
\includegraphics[width=7cm]{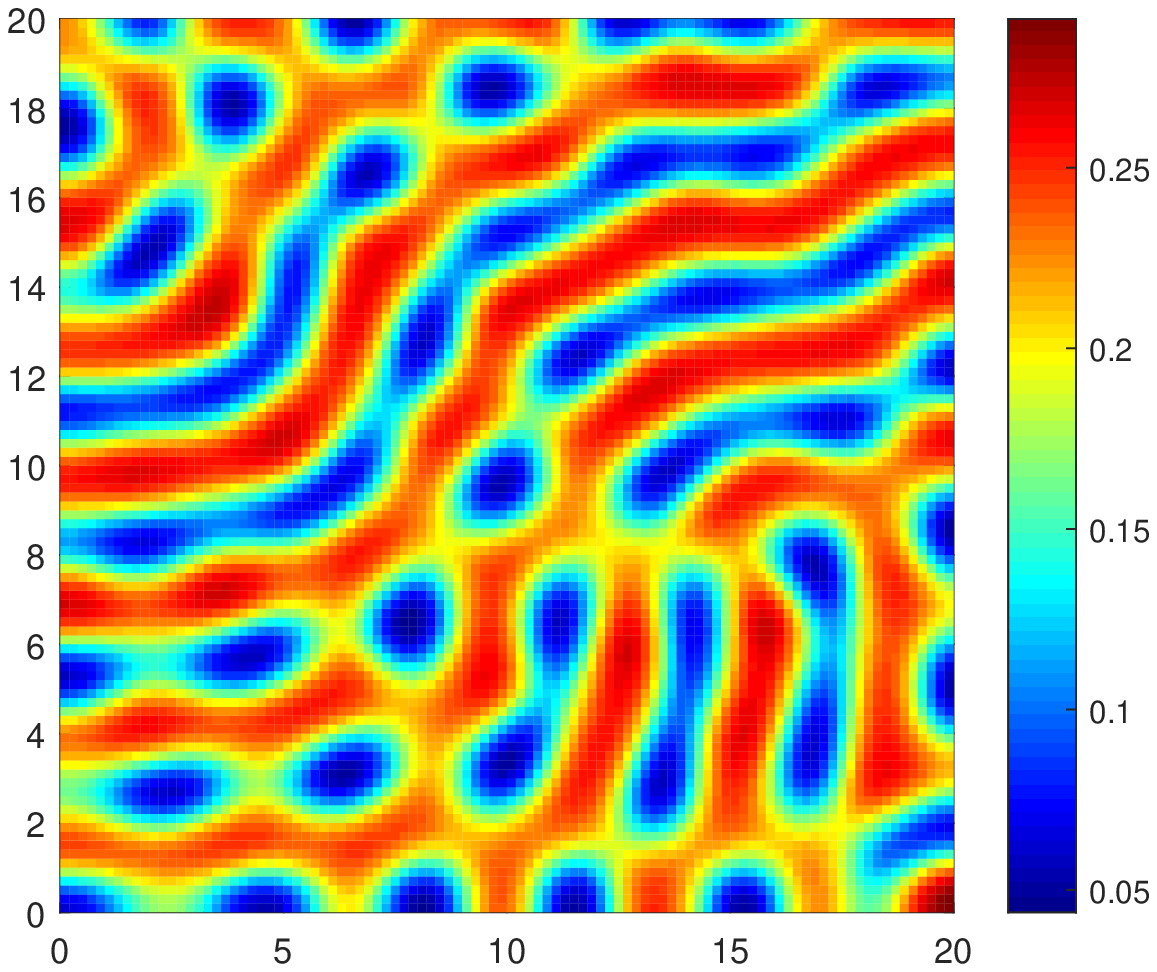}
}
\quad
\subfigure[$d_2=3$, $t=1000$]{
\includegraphics[width=7cm]{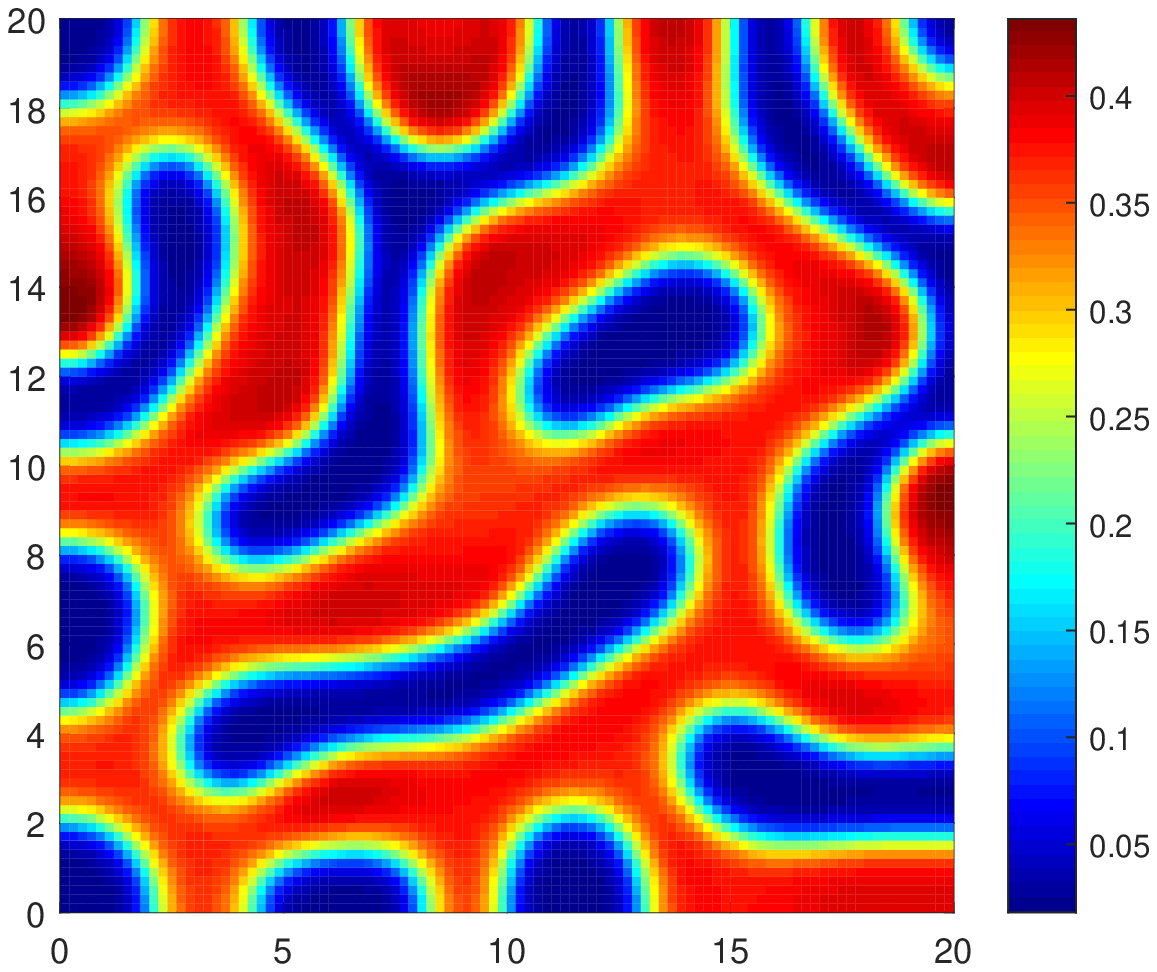}
}
\caption{Diffusion coefficient: (a) $d_2=0.32$, (b) $d_2=3$. Self diffusion coefficient of the susceptible controls the growth of pattern: cold spots-stripes, stripes patten formation for $d_{1}=0.01$, $k=0.01$, $\rho=0.1$ and $\mu=0.55$.}
\label{fig9}
\end{figure}

From Fig.~\ref{fig10} and Fig.~\ref{fig11}, we set the fear effect is $k=0.01$, reducing reproduction ability of infected hosts $\rho=0.1$~\cite{ref1}, diffusion coefficients of $I$ is $d_2=0.25$~\cite{ref1} and diffusion coefficients of $S$ to $d_1=0.008$ and $d_1=0.005$, other parameters are $\mu=0.55$~\cite{ref1} and shown in Table~\ref{tab2}. Fig.~\ref{fig10} shows that maximum real part of the roots of Eq.~(\ref{3.5}) and $Det(J^j_{n*})$ against $j$ for different $d_{1}$ taken from the Turing region. Respectively, we observed the change of the pattern shape. The results showed that self diffusion coefficient of the infected can control the growth of pattern: cold spots-stripes patten (Fig.~\ref{fig11}(a))$\rightarrow$ cold stripes patten (Fig.~\ref{fig11}(b)).
\begin{figure}
\centering
\subfigure[]{
\includegraphics[width=7cm]{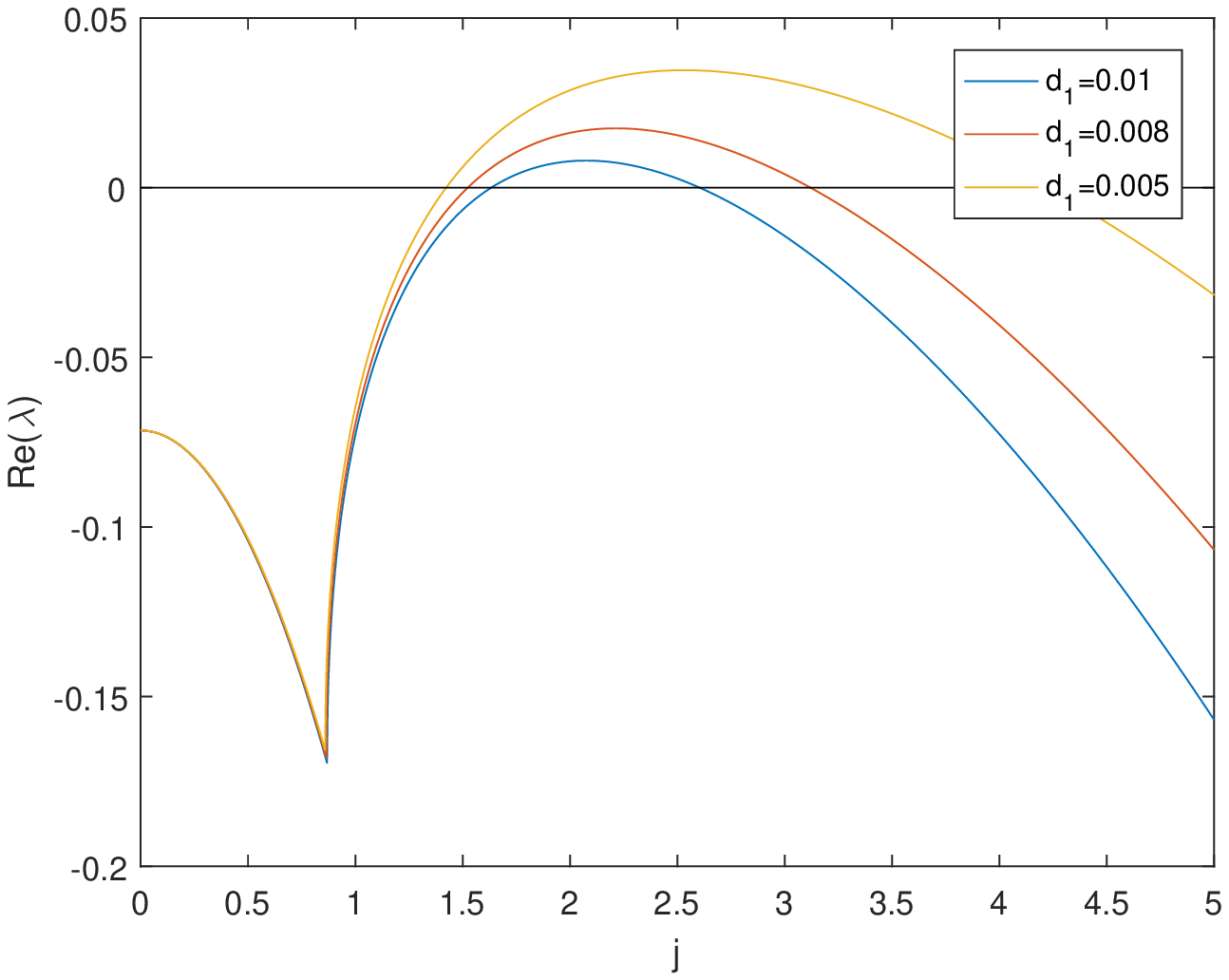}
}
\quad
\subfigure[]{
\includegraphics[width=7cm]{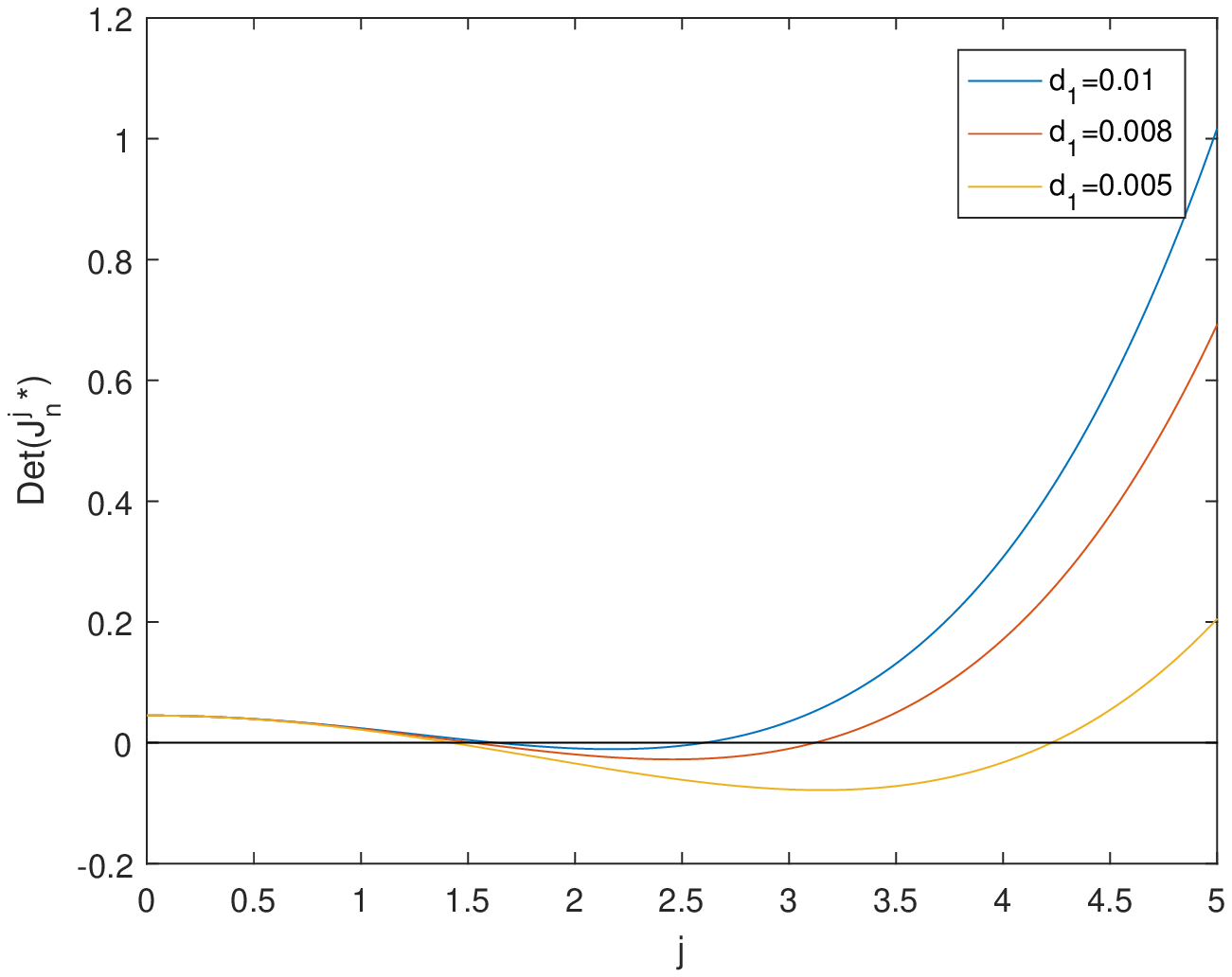}
}

\caption{Plots of (a) the maximum real part of the roots of Eq.~(\ref{3.5}) and (b) $Det(J^j_{n*})$ against $j$ for different $d_{1}$ taken from the Turing region. Other parameters are set to $d_{2}=0.25$, $k=0.01$, $\rho=0.1$ and $\mu=0.55$.}
\label{fig10}
\end{figure}

\begin{figure}
\centering
\subfigure[$d_1=0.008$, $t=1000$]{
\includegraphics[width=7cm]{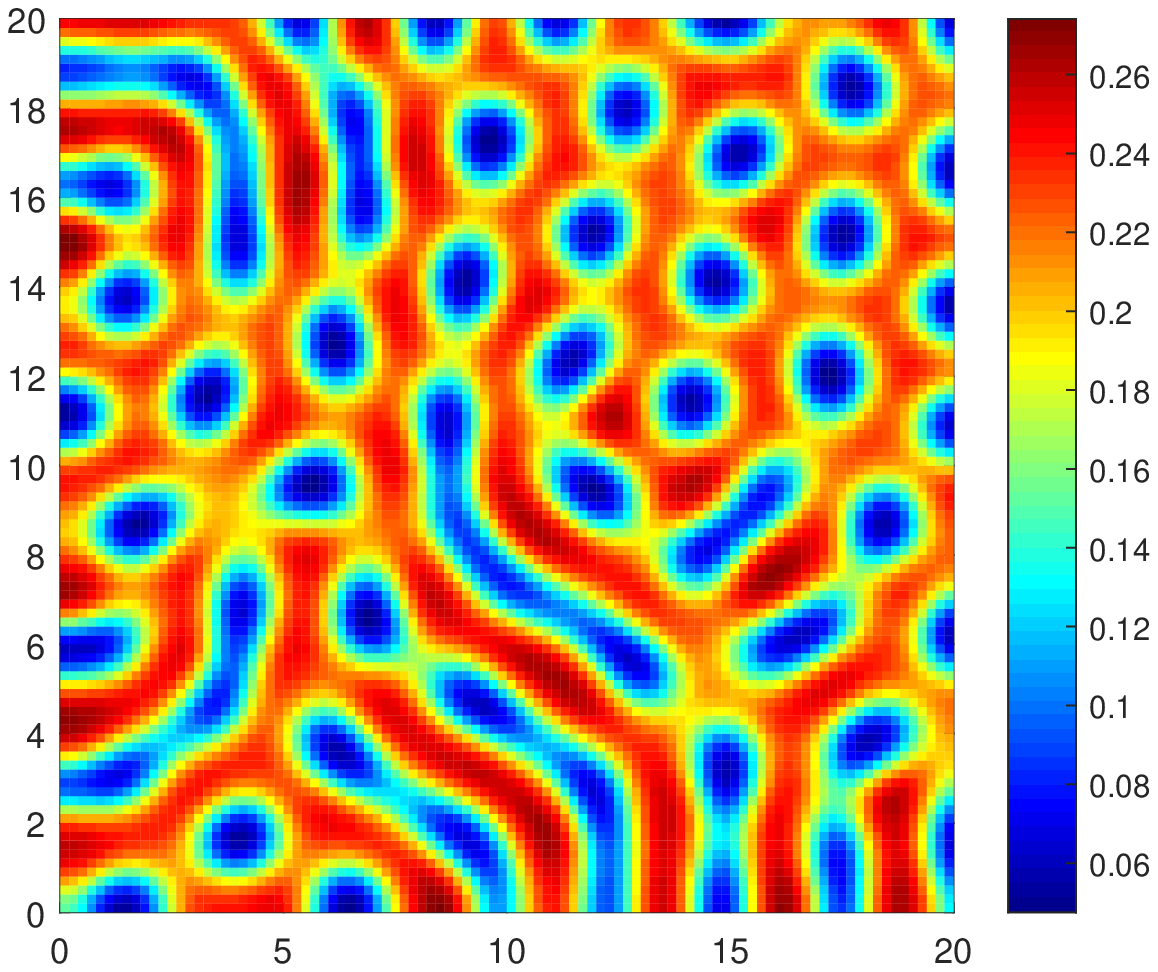}
}
\quad
\subfigure[$d_1=0.005$, $t=1000$]{
\includegraphics[width=7cm]{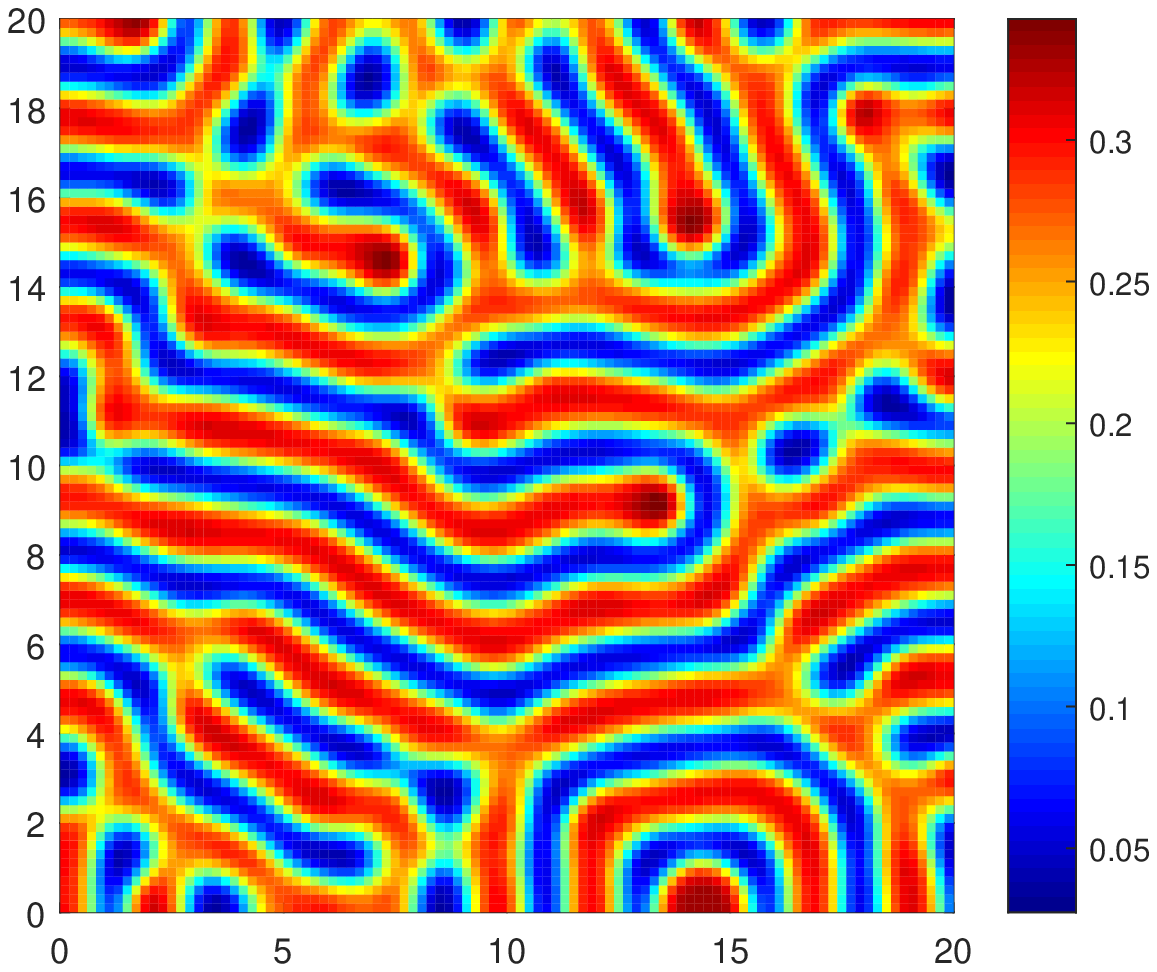}
}
\caption{Diffusion coefficient: (a) $d_1=0.008$, (b) $d_1=0.005$. Self diffusion coefficient of the infected controls the growth of pattern: cold spots-stripes, stripes patten formation for $d_{2}=0.25$, $k=0.01$, $\rho=0.1$ and $\mu=0.55$.}
\label{fig11}
\end{figure}

\section{Conclusions}\label{section5}
In conclusion, we study the pattern formation of a host parasite model induced by fear effect. The conditions of Turing instability are obtained through theoretical analysis. With the help of numerical simulation, the influence of various parameters on the pattern formation is explored by selecting different control parameters (natural mortality $\mu$, fear effect $k$, reducing reproduction ability $\rho$, diffusion coefficient $d_1, d_2$). Through the observation of the pattern growth, we find some interesting phenomena. The results show that under the effect of fear, the pattern growth mechanism is always unified, growing in the form of cold spots $\rightarrow$ cold spots-stripes $\rightarrow$ cold stripes $\rightarrow$ hot stripes $\rightarrow$ hot spots-stripes $\rightarrow$ hot spots, which is also consistent with the results of literature~\cite{ref5}. Moreover, we also find that the change of pattern growth caused by natural mortality was opposite to that of fear factor.

The dynamic phenomena of ODE predator-prey system considering fear effect include: promoting stability, periodic solution (limit cycle), making chaotic state become stable and so on~\cite{ref2,ref4,ref6,ref7,ref10,ref28,ref29,ref30,ref31,ref32,ref33}. However, researches on the change of spatial population dynamics caused by fear effect are still not general. Through the description of~\cite{ref5,ref35}, we can find that the evolution of pattern induced by fear effect often follows the rule that with the increase of fear, the growth order of pattern is: cold spots $\rightarrow$ cold spots-stripes $\rightarrow$ cold stripes $\rightarrow$ hot stripes $\rightarrow$ hot spots-stripes $\rightarrow$ hot spots. As for whether such a conclusion is general, we need to continue to explore in our future work.
\begin{acknowledgments}
This work was supported by the Innovative Research Project of Shenzhen under Project No. KQJSCX20180328165509766, Nature Science Foundation of Guangdong Province under Project No. 2020A1515010812 and 2021A1515011594.
\end{acknowledgments}

\section*{DATA AVAILABILITY}
Data sharing is not applicable to this article as no new data were created or analyzed in this study.
\small
\bibliography{aipsamp1}
\end{document}